\renewcommand{\paragraph}[1]{\vspace{1.5mm}\noindent \textbf{#1}}
\mathchardef\mhyphen="2D 
\newtheorem{theorem}{Theorem}
\newtheorem{corollary}[theorem]{Corollary}
\newtheorem{lemma}[theorem]{Lemma}
\newtheorem{definition}[theorem]{Definition}
\newtheorem{claim}[theorem]{Claim}
\newenvironment{proof-sketch}{\noindent{\bf Sketch of Proof:}\hspace*{1em}}{\qed\bigskip}
\newenvironment{proof-idea}{\noindent{\bf Proof Idea:}\hspace*{1em}}{\qed\bigskip}
\newenvironment{proof-of-lemma}[1]{\noindent{\bf Proof of Lemma #1:}\hspace*{1em}}{\qed\bigskip}
\newenvironment{proof-of-proposition}[1]{\noindent{\bf Proof of Proposition #1:}\hspace*{1em}}{\qed\bigskip}
\newenvironment{proof-of-theorem}[1]{\noindent{\bf Proof of Theorem #1:}\hspace*{1em}}{\qed\bigskip}
\newenvironment{proof-attempt}{\noindent{\bf Proof Attempt:}\hspace*{1em}}{\qed\bigskip}
\newcommand{\sig}{{\sigma}}
\newcommand{\eps}{\varepsilon}
\newcommand{\logeps}{\log\eps^{-1}}
\newcommand{\E}{\mathbb{E}}
\newcommand{\cE}{\mathcal{E}}
\newcommand{\cF}{\mathcal{F}}
\newcommand{\tr}[1]{Tr(#1)}
\newcommand{\tensor}{\otimes}
\newcommand{\bra}[1]{\langle #1|}
\newcommand{\ket}[1]{|#1\rangle}
\newcommand{\braket}[2]{\langle #1|#2\rangle}
\newcommand{\ketbra}[2]{\ket{#1}{\bra{#2}}}
\newcommand{\half}{\frac 1 2}
\newcommand{\set}[1]{{\left\{ #1\right\}}}
\newcommand{\zo}{\set{0,1}}
\newcommand{\abs}[1]{\left| #1 \right|}
\newcommand{\norm}[2]{\left| #2 \right|_\mathrm{#1}}
\newcommand{\dnorm}[2]{\left\| #2 \right\|_\mathrm{#1}}
\newcommand{\bignorm}[2]{\big| #2 \big|_\mathrm{#1}}
\newcommand{\remove}[1]{}
 \author{Roy Kasher\footnote{Blavatnik School of Computer Science, Tel Aviv University, Tel Aviv 69978, Israel. Supported by
 JK's Individual Research Grant of the Israeli Science Foundation.}
 \and
 Julia Kempe\footnote{Blavatnik School of Computer Science, Tel Aviv University, Tel Aviv 69978, Israel.
Supported by the European Commission under the Integrated Project Qubit Applications (QAP) funded by the IST
directorate as Contract Number 015848, by an Alon Fellowship of the Israeli Higher Council of Academic Research, by an
Individual Research Grant of the Israeli Science Foundation, by a European Research Council (ERC) Starting Grant, by a
Raymond and Beverly Sackler Career Development Chair and by the Wolfson Family Charitable Trust.}
 }
\title{\bf Two-Source Extractors Secure Against Quantum Adversaries}
\begin{document}
\maketitle

\setcounter{page}{0}\thispagestyle{empty}

\begin{abstract}
We initiate the study of multi-source extractors in the quantum world. In this setting, our goal is to extract random
bits from two independent weak random sources, on which two quantum adversaries store a bounded amount of information.
Our main result is a two-source extractor secure against quantum adversaries, with parameters closely matching the
classical case and tight in several instances. Moreover, the extractor is secure even if the adversaries share
entanglement. The construction is the Chor-Goldreich~\cite{CG88} two-source inner product extractor and its multi-bit
variant by Dodis et al.~\cite{DEOR04}. Previously, research in this area  focused on the construction of seeded
extractors secure against quantum adversaries; the multi-source setting poses new challenges, among which is the
presence of entanglement that could potentially break the independence of the sources.
\end{abstract}

\newpage

\section{Introduction and Results}

Randomness extractors are fundamental in many areas of computer science, with numerous applications to derandomization,
error-correcting codes, expanders, combinatorics and cryptography, to name just a few. Randomness extractors generate
almost uniform randomness from imperfect sources, as they appear either in nature, or in various applications.
Typically, the imperfect source is modelled as a distribution over $n$-bit strings whose {\em min-entropy} is at least
$k$, i.e., a distribution in which no string occurs with probability greater than $2^{-k}$ \cite{SV84,CG88,Zuc90}. Such
sources are known as {\em weak sources}. One way to arrive at a weak source is to imagine that an adversary (or some
process in nature), when in contact with a uniform source, {\em stores} $n-k$ bits of information about the string
(which are later used to break the security of the extractor, i.e. to distinguish its output from uniform). Then, from the adversary's point of view, the source
essentially has min-entropy $k$.

Ideally, we would like to extract randomness from a weak source. However, it is easy to see that no deterministic
function can extract even one bit of randomness from all such sources, even for min-entropies as high as $n-1$ (see
e.g.~\cite{SV84}). One main approach to circumvent this problem is to use a short truly random {\em seed} for
extraction from the weak source ({\em seeded extractors}) (see, e.g., \cite{Shaltiel02}). The other main approach,
which is the focus of the current work, is to use several independent weak sources ({\em seedless extractors}) (e.g.
\cite{CG88,Vaz87,DEOR04,Bou05,Raz05} and many more).

With the advent of quantum computation, we must now deal with the possibility of quantum adversaries (or quantum
physical processes) interfering with the sources used for randomness extraction. For instance, one could imagine that a
quantum adversary now stores $n-k$ {\em qubits} of information about the string sampled from the source. This scenario
of a {\em bounded storage quantum adversary} arises in several applications, in particular in cryptography.

Some constructions of {\em seeded} extractors were shown to be secure in the presence of quantum adversaries:
K{\"o}nig, Maurer, and Renner~\cite{RK05, KMR05, Ren05} proved that the pairwise independent extractor of~\cite{ILL89}
is also good against quantum adversaries, and with the same parameters. K{\"o}nig and Terhal \cite{KT08} showed that
any one-bit output extractor is also good against quantum adversaries, with roughly the same parameters.
In light of this, it was tempting to conjecture that {\em any} extractor is also secure against quantum storage.
Somewhat surprisingly, Gavinsky et al.~\cite{GKKRW08} gave an example of a seeded extractor that is secure against
classical storage but becomes insecure even against very small quantum storage. This example has initiated a series of
recent ground-breaking work that examined which seeded extractors stay secure against bounded storage quantum
adversaries.
Ta-Shma~\cite{TaShma09} gave an extractor with a short (polylogarithmic) seed extracting a polynomial fraction of the
min-entropy. His result was improved by De and Vidick~\cite{DV10} extracting almost all of the min-entropy. Both
constructions are based on Trevisan's extractor~\cite{Tre01}.

However, the question of whether {\em seedless} multi-source extractors can remain secure against quantum adversaries
has remained wide open. The multi-source scenario corresponds to several independent adversaries, each tampering with
one of the sources, and then jointly trying to distinguish the extractor's output from uniform. In the classical
setting this leads to several independent weak sources. In the quantum world, measuring the adversaries' stored
information might break the independence of the sources, thus jeopardizing the performance of the
extractor.\footnote{Such an effect appears also in {\em strong seeded} extractors and has been discussed in more detail
in~\cite{KT08}.} Moreover, the multi-source setting offers a completely new aspect of the problem: the adversaries
could potentially share {\em entanglement} prior to tampering with the sources. Entanglement between several parties
has been known to yield several astonishing effects with no counterpart in the classical world, e.g., non-local
correlations~\cite{Bell:64a} and superdense coding~\cite{BW92}.

We note that the example of Gavinsky et al. can also be viewed as an example in the two-source model; we can imagine
that the seed comes from a second source (of full entropy in this case, just like any seeded extractor can be
artificially viewed as a two-source extractor). And obviously, in the same way, recent work on quantum secure seeded
extractors artificially gives secure two-source extractors, albeit for a limited range of parameters and without
allowing for entanglement. However, no one has as of yet explored how more realistic multi-source extractors fare
against quantum adversaries, and in particular how entanglement might change the picture. We ask: Are there any good
multi-source extractors secure against quantum bounded storage? And does this remain true when considering
entanglement?

\paragraph{Our results:} In this paper we answer all these questions in the positive. We focus on the inner-product
based two-source extractor of Dodis et al.~\cite{DEOR04} (DEOR-extractor).
Given two independent weak sources $X$ and $Y$ with the same length $n$ and min-entropies $k_1$ and $k_2$ satisfying
$k_1+k_2 \gtrapprox n$, this extractor gives $m$ close to uniform random bits, where $m \approx
\max(k_1,k_2)+k_1+k_2-n$. In recent years several two-source extractors with better parameters have been presented;
however, the DEOR-construction stands out through its elegance and simplicity and its parameters still fare very well
in comparison with recent work (e.g.,~\cite{Bou05,Raz05}).

A first conceptual step in this paper is to define the model of quantum adversaries and of security in the two-source
scenario (see Defs.~\ref{def:storage} and \ref{def:ext_storage}): Each adversary gets access to an independent weak
source $X$ (resp.~$Y$), and is allowed to store a {\em short} arbitrary quantum state.\footnote{In the setting of
seeded extractors with one source, this type of adversary was called {\em quantum encoding} in~\cite{TaShma09}.} In the
entangled setting, the two adversaries may share arbitrary prior entanglement, and hence their final joint stored state
is the possibly entangled state $\rho_{XY}$. In the non-entangled case their joint state is of the form
$\rho_{XY}=\rho_X \otimes \rho_Y$. In both cases, the security of the extractor is defined with respect to the joint
state they store.

\begin{definition}\label{def:ext_informal}[Two-source extractor against (entangled) quantum storage (informal):]
A function $E:\{0,1\}^n \times \{0,1\}^n \to \{0,1\}^m$ is a $(k_1,k_2,\eps)$ extractor against $(b_1,b_2)$ (entangled)
quantum storage if for any sources $X,Y$ with min-entropies $k_1,k_2$, and any joint stored quantum state  $\rho_{XY}$
prepared as above, with $X$-register of $b_1$ qubits and $Y$-register of $b_2$ qubits, the distribution $E(X,Y)$ is
$\eps$-close to uniform even when given access to  $\rho_{XY}$.
\end{definition}

Depending on the type of adversaries, we will say $E$ is secure against {\em entangled} or {\em non-entangled} storage.
Note again that entanglement between the adversaries is specific to the multi-source scenario and does not arise in the
case of seeded extractors.

Having set the framework, we show that the construction of Dodis et al.~\cite{DEOR04} is secure, first in the case of
non-entangled adversaries.

\begin{theorem}\label{thm:deor_unen}
The DEOR-construction is a $(k_1,k_2,\eps)$ extractor against $(b_1,b_2)$ non-entangled storage with $m =
(1-o(1))\max(k_1-\frac{b_1}{2},k_2-\frac{b_2}{2}) + \half(k_1-b_1+k_2-b_2-n) - 9\logeps - O(1)$ output bits, provided
$k_1+k_2-\max(b_1,b_2) > n + \Omega(\log^3(n/\eps))$.
\end{theorem}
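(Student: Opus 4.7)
The plan is to proceed in two layers: first reduce multi-bit security to single-bit security via a quantum version of the Vazirani XOR lemma, and then establish single-bit security of the inner-product extractor against non-entangled quantum storage on both sources.

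For the single-bit step, consider the one-bit inner product $f(x,y) = \langle x,y\rangle$ (over $\mathbb{F}_2^n$). Since the storage $\rho_X \otimes \rho_Y$ is a product state, I plan to convert one of the two quantum registers into a classical register by performing an optimal distinguishing measurement on it, dropping into a setting covered by the K\"onig--Terhal theorem~\cite{KT08} that every one-bit output extractor remains secure under bounded quantum storage. Concretely, I measure $\rho_Y$ to obtain a classical outcome $M_Y$; a standard splitting argument shows that for all but an $\eps$-fraction of outcomes $m$, the source $Y$ conditioned on $M_Y=m$ has min-entropy at least $k_2 - b_2 - O(\logeps)$. Conditioning on such a good $m$, the pair $(X,Y)$ is now a two-source problem in which only the $X$-side still carries quantum side information $\rho_X$, and combining the classical inner-product guarantee with K\"onig--Terhal on the $X$-side yields a bias bound. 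Symmetrically, measuring $\rho_X$ first and applying K\"onig--Terhal on the $Y$-side yields the analogous bound with $b_1$ and $b_2$ swapped; taking the better of the two analyses produces the $\max(k_1 - b_1/2,\,k_2 - b_2/2)$ term and accounts for why the sufficient condition involves $\max(b_1,b_2)$ rather than $b_1+b_2$.

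For the multi-bit step, I invoke the quantum Vazirani XOR lemma (obtained routinely from the Fourier expansion of trace distance together with the product structure of $\rho_X \otimes \rho_Y$): $E(X,Y) \in \zo^m$ is $\eps$-close to uniform given the side information provided every non-zero linear combination $\langle \alpha, E(X,Y)\rangle$ has bias at most $\eps\cdot 2^{-m/2}$. In the DEOR construction $E(X,Y)$ consists of the first $m$ coordinates of the finite-field product $X\cdot Y \in \mathbb{F}_{2^n}$, and by the non-degeneracy of the field trace form, every non-zero linear combination $\langle \alpha, E(X,Y)\rangle$ is equal to the single-bit inner product $\langle \alpha\cdot X,\,Y\rangle$ for some non-zero $\alpha \in \mathbb{F}_{2^n}$. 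Multiplication by $\alpha$ is a bijection on the support of $X$, so $\alpha\cdot X$ has the same min-entropy and the adversary's storage $\rho_X$ can be precomposed with this permutation without changing the quantum bound; hence the single-bit analysis applies uniformly over $\alpha$, and paying the $2^{m/2}$ factor determines the output length $m$ as claimed.

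The main obstacle will be the careful bookkeeping of error terms, and in particular arranging things so that the entropy cost comes out as $\max(b_1,b_2)$ rather than $b_1+b_2$. Each step -- the measurement-to-classical reduction on one register, the K\"onig--Terhal square-root loss on the other, and the $2^{m/2}$ factor from the XOR reduction -- contributes a multiplicative penalty to $\eps$ that must be balanced against the entropy budget. Exploiting the symmetry between the two sources lets us commit to measuring whichever side is cheaper, which is what makes the $\max$-form bound achievable; making this precise requires the conditional min-entropy estimates to be robust (smooth min-entropy with $\eps$-probability exceptional events), and this robustness is the source of both the additive $9\logeps$ slack in the output length and the $\Omega(\log^3(n/\eps))$ slack in the entropy condition.
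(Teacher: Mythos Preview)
Your plan has a genuine gap: the XOR-lemma step alone cannot produce the claimed output length. In the paper, applying the quantum XOR lemma to the one-bit inner-product bound yields only an \emph{X-strong} (or \emph{Y-strong}) extractor with roughly $m \approx \tfrac{1}{2}(k_1+k_2-b_2-n) - \logeps$ output bits (Lem.~\ref{lem:deor_xstrong_en}); this is just the second summand in the theorem. The dominant term $(1-o(1))\max(k_1-\tfrac{b_1}{2},\,k_2-\tfrac{b_2}{2})$ does \emph{not} come from the XOR reduction at all. It comes from a further composition step (Lem.~\ref{lem:composition}): the few bits output by the strong two-source extractor are used as the \emph{seed} of a seeded extractor against quantum storage (the Trevisan-based construction of~\cite{DPVR09}, Cor.~\ref{cor:dpvr_seeded_ext}), applied to the source $X$ (resp.~$Y$). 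That seeded extractor then pulls out an additional $\approx k_1-b_1$ (resp.~$k_2-b_2$) bits, and choosing the better of the two paths gives the $\max$. The $\Omega(\log^3(n/\eps))$ in the hypothesis is precisely the seed length required by~\cite{DPVR09}; it has nothing to do with smoothing or exceptional-event robustness as you suggest. Without this composition step your argument stalls at roughly half the claimed output.

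A secondary issue is your one-bit analysis. The paper does not use K\"onig--Terhal here; it uses the Cleve--van Dam--Nielsen--Tapp reduction together with the Nayak--Salzman one-way bound (Lem.~\ref{lem:enhanced_cleve}) to get bias $2^{-(k_1+k_2-b_2-n+2)/2}$ directly for the X-strong case. Your measure-one-side-then-KT route is workable but incurs the square-root loss intrinsic to~\cite{KT08}, which degrades the exponent (compare Cor.~\ref{cor:ext_ip_knowledge}, where the condition becomes $k_1+k_2 \ge n + 6\logeps$ instead of $2\logeps$). This would propagate through the XOR lemma and shrink the seed you can feed into the composition, so even after adding the missing step you would land closer to the parameters of Thm.~\ref{thm:deor_knowledge} than Thm.~\ref{thm:deor_unen}.
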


As we show next the extractor remains secure even in the case of entangled adversaries. Notice the loss of essentially
a factor of $2$ in the allowed storage; this is related to the fact that superdense coding allows to store $n$ bits
using only $n/2$ entangled qubit pairs.

\begin{theorem}\label{thm:deor_en}
The DEOR-construction is a $(k_1,k_2,\eps)$ extractor against $(b_1,b_2)$ entangled storage with  $m =
(1-o(1))\max(k_1-b_2,k_2-b_1) + \half(k_1-2b_1+k_2-2b_2-n) - 9\logeps - O(1)$ output bits, provided $k_1 + k_2 -
2\max(b_1,b_2) > n + \Omega(\log^3(n/\eps))$.
\end{theorem}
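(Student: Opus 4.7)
The plan is to reduce Theorem~\ref{thm:deor_en} to the non-entangled case (Theorem~\ref{thm:deor_unen}), paying a factor of two in storage to absorb the pre-shared entanglement. The guiding intuition is superdense coding: $b$ qubits of storage combined with entanglement can convey up to $2b$ classical bits of information about a source, so a $(b_1,b_2)$-entangled adversary should be no more damaging than a $(2b_1,2b_2)$-non-entangled one. Substituting $(2b_1,2b_2)$ into Theorem~\ref{thm:deor_unen} already reproduces the stated feasibility condition $k_1+k_2-2\max(b_1,b_2) > n + \Omega(\log^3(n/\eps))$ and the $\half(k_1-2b_1+k_2-2b_2-n)$ summand of the output length.

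To make the reduction precise, I would first describe an arbitrary $(b_1,b_2)$-entangled adversary as a pre-shared pure state $\ket{\phi}_{AB}$ together with local CPTP encodings $\cE^A_X$ on $(A,X)$ and $\cF^B_Y$ on $(B,Y)$, yielding a joint storage $\rho_{XY}=(\cE^A_X\otimes\cF^B_Y)(\ket{\phi}\bra{\phi}_{AB})$ on $b_1+b_2$ qubits. The core technical step is a \emph{simulation lemma} asserting that for any such encoding there exists a non-entangled encoding on $(2b_1,2b_2)$ qubits whose distinguishing advantage against $E(X,Y)$ is at least as large, so that the bound from Theorem~\ref{thm:deor_unen} transfers. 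The natural route is via the quantum chain rule for smooth conditional min-entropy combined with a Nayak-type bound showing that the entanglement-assisted classical capacity of $b$ qubits is exactly $2b$.

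The refinement from $\max(k_1-b_1,k_2-b_2)$ (what a black-box reduction would yield) to the crossed $\max(k_1-b_2,k_2-b_1)$ in the theorem statement appears to require reopening the proof of Theorem~\ref{thm:deor_unen} rather than invoking it as a black box. The relevant observation is that Alice's $b_1$ qubits, although entangled with Bob's storage, reveal at most $b_1$ bits about $Y$ (Alice's half of the entanglement is only $b_1$-dimensional), and symmetrically Bob's storage reveals at most $b_2$ bits about $X$. Tracking this asymmetry through the XOR-lemma and character-sum analysis behind Theorem~\ref{thm:deor_unen} should swap the indices of the $b$'s and produce the cross terms $k_1-b_2$ and $k_2-b_1$.

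The main obstacle is proving the simulation lemma with a tight enough loss. A coarse ``give Alice all of Bob's state'' style argument costs an additive $b_1+b_2$, which would spoil the $2\max(b_1,b_2)$ scaling; one must exploit the structural fact that all cross-party correlations in $\rho_{XY}$ arise from the pre-shared $\ket{\phi}_{AB}$, since $X$ and $Y$ themselves are independent classical sources. Quantifying this correctly, using the quantum conditional min-entropies $\tilde H_\infty(X\mid\rho_{XY})$ and $\tilde H_\infty(Y\mid\rho_{XY})$ together with the factor-of-two superdense coding bound, is where the bulk of the technical effort should lie.
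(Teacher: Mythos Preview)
Your proposal has a genuine gap: the ``simulation lemma'' you hinge everything on is neither proved nor, in general, true. You want every $(b_1,b_2)$ entangled adversary to be dominated by some $(2b_1,2b_2)$ non-entangled adversary, but non-entangled storage is always of product form $\rho_x\otimes\rho_y$, whereas entangled storage $\rho_{xy}$ can carry genuine cross-party correlations that a joint distinguisher may exploit. The paper itself points to \cite{GKRW09} for an exponential separation between the entangled and non-entangled SMP models, which is exactly the regime your simulation would have to cover; so a generic simulation with only a factor-two blowup in storage cannot exist. Your own caveat (``the main obstacle is proving the simulation lemma'') is in fact the whole proof.

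The paper does not reduce to Theorem~\ref{thm:deor_unen} at all. It proves the entangled case \emph{directly}: Lemma~\ref{lem:enhanced_cleve}(a) gives a one-way lower bound for inner product with prior entanglement (the factor of two enters via the Nayak--Salzman bound of Thm.~\ref{thm:ns}.\ref{thm_item:ns_en}, not via any simulation), the XOR-Lemma~\ref{lem:xor} lifts this to the multi-bit $E_D$ as an X-strong (resp.\ Y-strong) extractor (Lemma~\ref{lem:deor_xstrong_en}(a)), and Lemma~\ref{lem:composition}(a) composes with the seeded extractor of Cor.~\ref{cor:dpvr_seeded_ext} against $b_1+b_2$ storage. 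Your intuition for the crossed indices is also off: Alice never sees $Y$, so her storage reveals nothing about $Y$. The cross terms $k_1-b_2$ and $k_2-b_1$ arise because the X-strong bound depends only on $b_2$ (once $X$ is revealed, only Bob's $b_2$-qubit message about $Y$ matters, doubled by entanglement to $2b_2$), and symmetrically the Y-strong bound depends only on $b_1$; taking the better of the two routes and rewriting gives the stated $\max(k_1-b_2,k_2-b_1)$.
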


Note that in both cases, when the storage is linear in the source entropy we can output $\Omega(n)$ bits with
exponentially small error.  To compare to the performance of the DEOR-extractor in the classical case, note that a
source with min-entropy $k$ and {\em classical} storage of size $b$ roughly corresponds to a source of min-entropy
$k-b$ (see, e.g.,~\cite{TaShma09} Lem. 3.1). Using this correspondence, the extractor of~\cite{DEOR04} gives $m =
\max(k_1,k_2) + k_1 - b_1 + k_2 - b_2 - n - 6\logeps - O(1)$ output bits against classical storage, whenever
$k_1+k_2-\max(b_1,b_2)>n+\Omega(\log n\cdot(\log^2n + \logeps))$. Hence the conditions under which we can extract
randomness are essentially the same for DEOR and for our Thm.~\ref{thm:deor_unen}. The amount of random bits we can
extract is somewhat less than in the classical case, even when disregarding storage.

In the non-entangled case, we are able to generalize our result to the stronger notion of guessing entropy adversaries
or so called {\em quantum knowledge} (see discussion below and Sec.~\ref{sec:guessing} for details). We show that the
DEOR-extractor remains secure even in this case, albeit with slightly weaker parameters.

\begin{theorem}\label{thm:deor_knowledge}
The DEOR-construction is a $(k_1,k_2,\eps)$ extractor against quantum knowledge with  $m = (1-o(1))\max(k_1,k_2) +
\frac{1}{6}(k_1+k_2-n) - 9\logeps - O(1)$ output bits, provided $k_1+k_2 > n + \Omega(\log^3(n/\eps))$.
\end{theorem}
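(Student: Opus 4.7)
The plan is to lift the Fourier/character-based analysis of Theorem~\ref{thm:deor_unen} from the bounded-storage regime to the quantum-knowledge regime, where the only assumption on each adversary is that $H_\infty(X|E_X) \geq k_1$ and $H_\infty(Y|E_Y) \geq k_2$ for the respective quantum side informations. The overall joint cq-state has the product form $\rho_{XE_X}\otimes \sigma_{YE_Y}$, and the goal is to bound $\dnorm{tr}{\rho_{E(X,Y)E_XE_Y} - U_m \otimes \rho_{E_XE_Y}}$ by $\eps$.

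First I would apply a quantum XOR-type lemma (as used in the proof of Theorem~\ref{thm:deor_unen}) to reduce trace-distance closeness to a uniform bound, for every non-zero character $s\in \{0,1\}^m$, on the trace norm of the character operator $\E_{X,Y}\bigl[(-1)^{s\cdot E(X,Y)}\, \rho_{E_X|X}\otimes \sigma_{E_Y|Y}\bigr]$. Because the DEOR-extractor outputs a fixed linear projection of the $GF(2^n)$-product $X\cdot Y$, each such character equals $(-1)^{\langle \phi_s(Y),X\rangle}$ for an invertible linear map $\phi_s$ depending only on $s$. The problem thus collapses to a single key inequality: a quantum analog of the Chor-Goldreich inner-product bias bound stating that, for independent cq-sources $\rho_{XE_X}, \sigma_{YE_Y}$ with conditional min-entropies $k_1,k_2$, the operator $\E_{X,Y}[(-1)^{\langle X,Y\rangle}\, \rho_{E_X|X}\otimes \sigma_{E_Y|Y}]$ has trace norm at most $2^{-(k_1+k_2-n)/c}$ for a small constant $c$.

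The technical heart of the argument is establishing this quantum inner-product lemma. My approach would exploit the operational identity $H_\infty(X|E)_\rho = -\log P_{\text{guess}}(X|E)_\rho$ of K\"onig-Renner-Schaffner (presumably recalled in Section~\ref{sec:guessing}) to recast each entropy assumption as a guarantee on the best POVM distinguishing the classical values of $X$ from $E_X$, and similarly for $Y$. I would then apply an operator-level Cauchy-Schwarz inequality twice to decouple the two adversarial registers, bounding the trace norm in terms of Hilbert-Schmidt-type quantities that can be evaluated via the guessing-probability characterization. This double Cauchy-Schwarz is exactly what replaces the coefficient $\half$ appearing in Theorem~\ref{thm:deor_unen} by $\frac{1}{6}$ in the conclusion. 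Once the key lemma is in hand, the remainder of the proof---union-bounding over characters, choosing $m$ so that the $9\logeps$ slack absorbs the XOR-lemma overhead, and verifying the min-entropy condition $k_1+k_2 > n + \Omega(\log^3(n/\eps))$---proceeds exactly as in the classical DEOR analysis.

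The main obstacle is precisely this exponent loss. Unlike the bounded-storage case, where one can classically enumerate the $2^b$ storage values and invoke a classical Chor-Goldreich bound with an additive penalty per qubit, here the adversary may prepare arbitrarily high-dimensional states, and naive measurement-based reductions produce bounds that depend on the dimension of $E_X, E_Y$ rather than on their conditional min-entropy. Keeping the side-information dimension out of the final estimate forces all manipulations to be carried out at the operator level; the Jensen-type step needed to convert operator norms into genuine expectations over $X,Y$ cubes the relevant expectation, which is both the origin of the $\frac{1}{6}$ and the reason I do not expect this exponent to improve to $\half$ without a genuinely new idea.
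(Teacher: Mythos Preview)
Your outline has the right skeleton (XOR-lemma reduction to a one-bit inner-product bound against quantum side information), but there is a genuine gap and a significant methodological difference from the paper.

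\textbf{The gap: you are missing the composition step.} The XOR-lemma together with your one-bit ``quantum Chor--Goldreich'' bound only yields an \emph{X-strong} extractor with roughly $\tfrac{1}{6}(k_1+k_2-n) - O(\logeps)$ output bits (this is the paper's Lemma~\ref{lem:deor_xstrong_knowledge}). That is nowhere near the claimed $m = (1-o(1))\max(k_1,k_2) + \tfrac{1}{6}(k_1+k_2-n) - 9\logeps - O(1)$. The leading $(1-o(1))\max(k_1,k_2)$ term does \emph{not} come from the character analysis at all: it comes from using the short output of the strong two-source extractor as the \emph{seed} of a seeded extractor against quantum knowledge (Cor.~\ref{cor:dpvr_ext_knowledge}, the De--Portmann--Vidick--Renner construction) applied to whichever source has higher entropy. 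The condition $k_1+k_2 > n + \Omega(\log^3(n/\eps))$ is precisely what guarantees the strong extractor's output is long enough to serve as that seed. Your ``union-bound over characters and choose $m$'' step cannot produce the $\max(k_1,k_2)$ term; you need this extra composition, and it must be a seeded extractor that is itself secure against quantum knowledge.

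\textbf{The one-bit bound: different mechanism, different accounting for the $1/6$.} The paper does not prove the quantum one-bit inner-product bound via an operator-level Cauchy--Schwarz as you propose. Instead it invokes the K\"onig--Terhal Pretty Good Measurement reduction (Lemma~\ref{lem:kt}, Cor.~\ref{cor:kt_one_bit_secure}): measure $\rho_Y$ with the PGM to obtain \emph{classical} side information $C_Y$, then apply the purely classical IP bound. This reduction costs a square root on the error \emph{and} a shift of $k_2$ by $\logeps$; solving the resulting self-referential inequality is what turns the classical exponent $1/2$ into $1/6$ (i.e., $3L = k_1+k_2-n+2$ then $\eps = 2^{-L/2}$). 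Your ``double Cauchy--Schwarz cubes the expectation'' story does not match this arithmetic---two Cauchy--Schwarz steps would naturally give $1/4$, not $1/6$---and as stated it is too vague to verify. It may be possible to get an operator-level bound directly, but you would need to spell out exactly which inequality replaces the PGM step and check the resulting exponent.
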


{\em Strong extractors:} The extractor in Thms.~\ref{thm:deor_unen}, \ref{thm:deor_en} and \ref{thm:deor_knowledge} is
a so called {\em weak} extractor, meaning that when trying to break the extractor, no full access to any of the sources
is given (which is natural in the multi-source setting). We also obtain several results in the so called {\em strong}
case (see Cor.~\ref{cor:ext_ip_superstrong}, Lem.~\ref{lem:deor_xstrong_en}, Cor.~\ref{cor:ext_ip_knowledge} and
Lem.~\ref{lem:deor_xstrong_knowledge}). A {\em strong} extractor has the additional property that the output remains
secure even if the adversaries later gain full access to any one (but obviously not both) of the
sources.\footnote{In~\cite{DEOR04}, this is called a {\em strong blender}.} See Sec.~\ref{sec:prelim} for details and a
discussion of the subtleties in defining a strong extractor in the entangled case, and Secs.~\ref{sec:cc},
\ref{sec:many} and \ref{sec:guessing} for our results in the strong case.

{\em Tightness:} In the one-bit output case, we show that our results are {\em tight}, both in the entangled and
non-entangled setting (see Lem.~\ref{claim:ext_ip_tight}).

\paragraph{Proof ideas and tools:} To show both of our results, we first focus on the simplest case of one-bit outputs.
In this case the DEOR extractor \cite{DEOR04} simply computes the inner product  $E(x,y)=x \cdot y$ $\pmod 2$ of the
$n$-bit strings $x$ and $y$ coming from the two sources. Assume that the two adversaries are allowed quantum storage of
$b$ qubits each. Given their stored information they jointly wish to distinguish $E(x,y)$ from uniform, or, in other
words, to predict $x \cdot y$. We start by observing  that this setting corresponds to the well known simultaneous
message passing (SMP) model in communication complexity,\footnote{The connection between extractors and communication
complexity has been long known, see, e.g.,~\cite{Vaz87}.} where two parties, Alice and Bob, have access to an input
each (which is unknown to the other). They each send a message of length $b$ to a referee, who, upon reception of both
messages, is to compute a function $E(x,y)$ of the two inputs. When $E$ is hard to compute, it is a good extractor.
Moreover, the entangled adversaries case corresponds to the case of SMP with entanglement between Alice and Bob, a
model that has been studied in recent work (see e.g. \cite{GKRW09, GKW06}).

Before we proceed, let us remark, that there are cases, where entanglement is known to add tremendous power to the SMP
model. Namely, Gavinsky et al.~\cite{GKRW09} showed an exponential saving in communication in the entangled SMP model,
compared to the non-entangled case.\footnote{This result has been shown for a relation, not a function. It is tempting
to conjecture that this result can be turned into an exponential separation for an extractor with entangled vs.
non-entangled adversaries. It is, however, not immediate how to turn a worst case relation lower bound into an average
case function bound, as needed in the extractor setting, so we leave this problem open.} This points to the possibility
that some extractors can be secure against a large amount of storage in the non-entangled case, but be insecure against
drastically smaller amounts of entangled storage. Our results show that this is not the case for the DEOR extractor,
i.e., that this construction is secure against the potentially harmful effects of entanglement.

In the one-bit output DEOR case we can tap into known results on the quantum communication complexity of the inner
product problem (IP). Cleve et al.~\cite{CvDNT98} and Nayak and Salzman~\cite{NS06} have given tight lower bounds in
the one-way and two-way communication model, with and without entanglement (which also gives bounds in the SMP model).
For instance, in the non-entangled case, to compute IP exactly in the one-way model, $n$ qubits of communication are
needed, and in the SMP model, $n$ qubits of communication are needed from Alice and from Bob, just like in the
classical case. Note that whereas in the communication setting typically worst case problems are studied, extractors
correspond to {\em average case} (w.r.t. to weak randomness) problems. With some extra work we can adapt the
communication lower bounds to weak sources and to the average bias which is needed for the extractor result. In fact,
the results we obtain hold in the strong case (where later one of the sources is completely exposed), which corresponds
to one-way communication complexity.

Tightness of our results comes from matching upper bounds on the one-way and SMP model communication complexity of the
inner product. Adapting the work of \cite{CG88} we can obtain tight bounds for any bias $\eps$. Somewhat surprisingly,
it seems no one has looked at tight upper bounds for IP in the {\em entangled SMP model}, where \cite{CvDNT98} give an
$n/2$ lower bound for the message length for Alice and Bob. It turns out this bound is tight,\footnote{We thank Ronald
de Wolf~\cite{W:personal} for generously allowing us to adapt his upper bound to our setting.} which essentially leads
to the factor $2$ separation in our results for the entangled vs. non-entangled case (see Sec.~\ref{sec:cc}).

To show our results for the case of multi-bit extractors, we use the nice properties of the DEOR construction (and its
precursors \cite{Vaz87,DO03}). The extractor outputs bits of the form $Ax \cdot y$. Vazirani's XOR-Lemma allows to
reduce the multi-bit to the one-bit case by relating the distance from uniform of the multi-bit extractor to the sum of
biases of XOR's of subsets of its bits. Each such XOR, in turn, is just a (linearly transformed) inner product, for
which we already know how to bound the bias. Our main technical challenge is to adapt the XOR lemma to the case of {\em
quantum} side-information (see Sec.~\ref{sec:prelim}). This way we obtain first results for multi-bit extractors, which
even hold in the case of strong extractors. Following \cite{DEOR04}, we further improve the parameters in the {\em
weak} extractor setting by combining our strong two-source extractor with a good seeded extractor (in our case with the
construction of \cite{DPVR09}) to extract even more bits. See Sec.~\ref{sec:many} for details.

\paragraph{Guessing entropy:}
One can weaken the requirement of bounded storage, and instead only place a lower bound on the \emph{guessing entropy}
of the source given the adversary's storage, leading to the more general definition of extractors secure against
guessing entropy. Informally, a guessing entropy of at least $k$ means that the adversary's probability of correctly
guessing the source is at most $2^{-k}$ (or equivalently, that given the adversary's state, the source has essentially
min-entropy at least $k$). Working with guessing entropy has the advantage that we no longer have to worry about two
parameters (min-entropy and storage) instead only working with one parameter (guessing entropy), and that the resulting
extractors are stronger (assuming all other parameters are the same), see Sec.~\ref{sec:guessing}. In the classical
world, a guessing entropy of $k$ is more or less equivalent to a source with $k$ min-entropy; in the quantum world,
however, things become less trivial. In the case of seeded extractors, this more general model has been successfully
introduced and studied in~\cite{Ren05,KT08,FS08,DPVR09,TSSR10}, where several constructions secure against bounded
guessing entropy were shown.\footnote{Renner~\cite{Ren05} deals with the notion of {\em relative min-entropy}, which
was shown to be equivalent to guessing entropy~\cite{KRS09}.}

In the case of \emph{non-entangled} two-source extractors, we can show (based on~\cite{KT08}) that any classical {\em
one-bit} output two-source extractor remains secure against bounded guessing entropy adversaries, albeit with slightly
worse parameters.  Moreover, our XOR-Lemma allow us prove security of the DEOR-extractor against guessing entropy
adversaries even in the multi-bit case (Thm.~\ref{thm:deor_knowledge}, see Sec.~\ref{sec:guessing} for the
details).\footnote{We are grateful to Thomas Vidick for pointing out that our XOR-Lemma allows us to obtain results
also in this setting.}

In the {\em entangled} adversaries case, one natural way to define the model is to require the guessing entropy of each
source given the corresponding adversary's storage to be high. This definition, however, is too strong: it is easy to
see that no extractor can be secure against such adversaries. This follows from the observation that by sharing a
random string $r_1r_2$ (which is a special case of shared entanglement) and having the first adversary store $r_1
\oplus x,r_2$ and the other store $r_1,r_2 \oplus y$, we keep the guessing entropy of $X$ (resp. $Y$) relative to the
adversary's storage unchanged yet we can recover $x$ and $y$ completely from the combined storage.

Hence we are naturally lead to consider the weaker requirement that the guessing entropy of each source given the
combined storage of \emph{both} adversaries is high. We now observe that already the DEOR one-bit extractor (where the
output is simply the inner product) is not secure under this definition, indicating that this definition is still too
strong. To see this, consider uniform $n$-bit sources $X,Y$, and say Alice stores $x \oplus r$, and Bob stores $y
\oplus r$, where $r$ is a shared random string. Obviously, their joint state does not help in guessing $X$ (or $Y$),
hence the guessing entropy of the sources is still $n$; but their joint state does give $x \oplus y$. If, in addition,
Alice also stores the Hamming weight $|x|\bmod 4$ and Bob $|y|\bmod 4$, the guessing entropy is barely affected, and
indeed one can easily show it is $n-O(1)$. However, their information now suffices to compute $x\cdot y$ exactly, since
$x \cdot y=\half((|x|+|y|-|x \oplus y|)\bmod 4)$. Hence inner product is insecure in this model even for very high
guessing entropies, even though it is secure against a fair amount of bounded storage.

In light of this, it is not clear if and how entangled guessing entropy sources can be incorporated into the model, and
hence we only consider bounded storage adversaries in the entangled case.

\paragraph{Related work:} We are the first to consider two-source extractors in the quantum world, especially against entanglement.
As mentioned, previous work on seeded extractors against quantum adversaries~\cite{RK05, KMR05, Ren05, KT08, TaShma09,
DV10, DPVR09, BT10} gives rise to trivial two-source extractors where one of the sources is not touched by the
adversaries. However, the only previous work that allows to derive results in the genuine two-source scenario is the
work by K\"onig and Terhal \cite{KT08}. Using what is implicit in their work, and with some extra effort, it is
possible to obtain results in the one-bit output non-entangled two-source scenario (which hold against guessing entropy
adversaries, but with worse performance than our results for the inner product extractor), and we give this result in
detail in Sec.~\ref{sec:guessing}. Moreover, \cite{KT08} show that any classical multi-bit extractor is secure against
bounded storage adversaries, albeit with an exponential decay in the error parameter. This easily extends to the
non-entangled two-source scenario, to give results in the spirit of Thm.~\ref{thm:deor_unen}. We have worked out the
details and comparison to Thm.~\ref{thm:deor_unen} in App.~\ref{app:storage}. Note, however, that to our knowledge no
previous work gives results in the entangled scenario.

\paragraph{\bf Discussion and Open Problems:}
We have, for the first time, studied two-source extractors in the quantum world. Previously, only seeded extractors
have been studied in the quantum setting. In the two-source scenario a new phenomenon appears: entanglement between the
(otherwise independent) sources. We have formalized what we believe the strongest possible notion of quantum
adversaries in this setting and shown that one of the best performing extractors, the DEOR-construction, remains
secure. We also show that our results are tight in the one-bit output case.

Our results for the multi-bit output DEOR-construction allow to extract slightly less bits compared to what is possible
classically. An interesting open quesiotn is whether it is possible to obtain matching parameters in the
(non-entangled) quantum case. One might have to refine the analysis and not rely solely on communication complexity
lower bounds. Alternatively, our quantum XOR-Lemma currently incurs a penalty exponential in either the length of the
output or the length of the storage. Any improvement here also immediately improves all three main theorems. In
particular, by removing the penalty entirely, Thm.~\ref{thm:deor_unen} can be made essentially optimal (with respect to
the classical case).

We have shown that inner product based constructions are necessarily insecure in two reasonable models of entangled
guessing entropy adversaries (and hence that bounded storage adversaries are the more appropriate model in the
entangled case). It should be noted that it is possible that other extractor constructions (not based on inner product)
could remain secure in this setting, and this subject warrants further exploration.

As pointed out, it is conceivable that entanglement could break the security of two-source extractors. Evidence for
this is provided by the communication complexity separation in the entangled vs. non-entangled SMP-model, given in
\cite{GKRW09}. A fascinating open problem is to turn this relational separation into an extractor that is secure
against non-entangled quantum adversaries but completely broken when entanglement is present.

Our work leaves several other open questions. It would be interesting to see if other multi-source extractors remain
secure against entangled adversaries, in particular the recent breakthrough construction by Bourgain \cite{Bou05} which
works for two sources with min-entropy $(1/2-\alpha)n$ each for some small constant $\alpha$, or the construction of
Raz \cite{Raz05}, where one source is allowed to have logarithmic min-entropy while the other has min-entropy slightly
larger than $n/2$. Both extractors are based on the inner product and output $\Omega(n)$ almost uniform bits.

And lastly, it would be interesting to see other application of secure multi-source extractors in the quantum world.
One possible scenario is multi-party computation. Classically, Kalai et al.~\cite{KLR09} show that sufficiently strong
two-source extractors allow to perform multi-party communication with weak sources when at least two parties are
honest. Perhaps similar results hold in the quantum setting.

\paragraph{Structure of the paper:}
In Sec.~\ref{sec:prelim} we introduce our basic notation and definitions, and describe the DEOR construction. Here we
also present one of our tools, the "quantum" XOR-Lemma. Sec.~\ref{sec:cc} is dedicated to the one-bit output case and
the connection to communication complexity and gives our tightness results. In Sec.~\ref{sec:many} we deal with the
multi-bit output case and prove our main result, Thms.~\ref{thm:deor_unen} and \ref{thm:deor_en}. In
Sec.~\ref{sec:guessing} we present our results against non-entangled guessing entropy adversaries (partly based
on~\cite{KT08}) and prove Thm.~\ref{thm:deor_knowledge}. App.~\ref{app:storage} works out the results that can be
derived from \cite{KT08} in the case of multi-bit extractors against non-entangled bounded storage.

\section{Preliminaries and Tools}\label{sec:prelim}
In this section we provide the necessary notation, formalize Def.\ref{def:ext_informal}, describe the DEOR-extractor
and present and prove our quantum XOR-Lemma. For background on quantum information see e.g. \cite{NC00}.

\paragraph{Notation:}
Given a classical random variable $Z$ and a set of density matrices $\set{\rho_z}_{z\in Z}$ we denote by $Z\rho_Z$ the
cq-state $\sum_{z\in Z} \Pr[Z=z]\ketbra{z}{z}\tensor \rho_z$. When the distribution is clear from the context we write
$p(z)$ instead of $\Pr[Z=z]$. For any random variable $Z'$ on the domain of $Z$, we define $\rho_{Z'}: = \sum_{z\in Z'}
\Pr[Z'=z]\rho_z$. For any random variable $Y$, let $Y\rho_Z:=\sum_{y\in Y} \Pr[Y=y]\ketbra{y}{y}\tensor \rho_{Z|Y=y}$.
We denote by $U_m$ the uniform distribution on $m$ bits. For matrix norms, we define $\norm{tr}{A} = \half \dnorm{1}{A}
= \half \tr{\sqrt{A^\dagger A}}$ and $\dnorm{2}{A} = \sqrt{\tr{A^\dagger A}}$.

\paragraph{Extractors against quantum storage:} We first formalize the different types of quantum storage.
\begin{definition}\label{def:storage}
For two random variables $X,Y$ we say $\rho_{XY}$ is a $(b_1,b_2)$ entangled storage if it is generated by two non
communicating parties, Alice and Bob, in the following way. Alice and Bob share an arbitrary entangled state. Alice
receives $x \in X$, Bob receives $y \in Y$. They each apply any quantum operation on their qubits. Alice then stores
$b_1$ of her qubits (and discards the rest), and Bob stores $b_2$ of his qubits, giving the state $\rho_{xy}$.

We denote by $\rho^A_{XY}$ the state obtained when Alice stores her entire state, whereas Bob stores only $b_2$ qubits
of his, and similarly for $\rho^B_{XY}$.

We say $\rho_{XY}$ is $(b_1,b_2)$ non-entangled storage if $\rho_{xy} = \rho_x \tensor \rho_y$ for all $x \in X,y \in
Y$.
\end{definition}

The security of the extractor is defined relative to the storage.

\begin{definition}\label{def:ext_storage}
A $(k_1, k_2, \eps)$ 2-source extractor against $(b_1,b_2)$ (entangled) quantum storage is a function $E:\zo^n \times
\zo^n \rightarrow \zo^m$ such that for any independent $n$-bit weak sources $X,Y$ with respective min-entropies
$k_1,k_2$, and any $(b_1, b_2)$ (entangled) storage $\rho_{XY}$, $\norm{tr}{E(X,Y)\rho_{XY} - U_m\rho_{XY}} \le \eps$.

The extractor is called {\em X-strong} if $\norm{tr}{E(X,Y)\rho_{XY}X - U_m\rho_{XY}X} \le \eps$,
and {\em X-superstrong} when $\rho_{XY}$ is replaced by $\rho^A_{XY}$.
It is called {\em (super)strong} if it is both X- and Y- {\em (super)strong}.
\end{definition}

A note on the definition: A strong extractor is secure even if at the distinguishing stage one of the sources is
completely exposed.
A superstrong extractor is secure even if, in addition, the matching party's entire state is also given. Without entanglement, the two are equal, as the state can be completely reconstructed from the source. In the
communication complexity setting the model of strong extractors corresponds to the SMP model where the referee also
gets access to one of the inputs, whereas the model of superstrong extractors corresponds to the one-way model,
where one party also has access to its share of the entangled state.

To prove $E$ is an extractor, it suffices to show that it is either X-strong or Y-strong. All our proofs follow this
route.

\paragraph {\bf Flat sources:} It is well known that any source with min-entropy $k$ is a convex combination of flat sources (i.e., sources that
are uniformly distributed over their support) with min-entropy $k$. In what follows we will therefore only consider
such sources in our analysis of extractors, as one can easily verify that  \[ \norm{tr}{E(X,Y)\rho_{XY} - U_m\rho_{XY}}
\le \max_{i,j} \norm{tr}{E(X_i,Y_j)\rho_{X_iY_j} - U_m\rho_{X_iY_j}},
\]
where $X = \sum \alpha_i X_i$ and $Y = \sum \beta_j Y_j$ are convex combinations of flat sources.

\paragraph{The DEOR construction:}
The following (strong) extractor construction is due to Dodis et al. \cite{DEOR04}. Every output bit is a linearly
transformed inner product, namely $A_ix\cdot y$ for some full rank matrix $A_i$, where $x$ and $y$ are the $n$-bit
input vectors. Here $x \cdot y:= \sum_{j=1}^n x_j y_j \pmod 2$. The matrices $A_i$ have the additional property that
every subset sum is also of full rank. This ensures that any XOR of some bits of the output is itself a linearly
transformed inner product.

\begin{lemma}[\cite{DEOR04}]
For all $n>0$, there exist an efficiently computable set of $n \times n$ matrices $A_1,A_2,\ldots,A_n$ over GF(2) such
that for any non-empty set $S \subseteq [n]$, $A_S:=\sum_{i\in S} A_i$ has full rank.
\end{lemma}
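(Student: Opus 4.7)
The plan is to identify $\mathrm{GF}(2)^n$ with the finite field $\mathrm{GF}(2^n)$ and exploit the fact that multiplication by a nonzero field element is a bijection. First, I would fix any basis $\alpha_1,\ldots,\alpha_n$ of $\mathrm{GF}(2^n)$ viewed as an $n$-dimensional vector space over $\mathrm{GF}(2)$; for concreteness, one can take $\alpha_i = \alpha^{i-1}$ where $\alpha$ is a root of a fixed irreducible polynomial of degree $n$ over $\mathrm{GF}(2)$. For each $i \in [n]$, define $A_i$ to be the $n\times n$ matrix over $\mathrm{GF}(2)$ representing the $\mathrm{GF}(2)$-linear map $\beta \mapsto \alpha_i \cdot \beta$ with respect to the chosen basis.

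Second, I would use the fact that the assignment $\beta \mapsto M_\beta$ (``matrix of multiplication by $\beta$'') is a $\mathrm{GF}(2)$-algebra homomorphism from $\mathrm{GF}(2^n)$ into the matrix algebra. In particular, it is additive, so for any non-empty $S \subseteq [n]$,
\[
A_S = \sum_{i\in S} A_i = M_{\alpha_S}, \qquad \text{where } \alpha_S := \sum_{i\in S} \alpha_i \in \mathrm{GF}(2^n).
\]
Since the $\alpha_i$ form a basis and $S$ is non-empty, $\alpha_S$ is a non-trivial linear combination of basis vectors, hence $\alpha_S \neq 0$. Because $\mathrm{GF}(2^n)$ is a field, multiplication by $\alpha_S$ is a bijection on $\mathrm{GF}(2^n)$, so $M_{\alpha_S}$ is invertible, i.e., $A_S$ has full rank.

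Finally, for efficient computability, I would note that an irreducible polynomial of degree $n$ over $\mathrm{GF}(2)$ can be found in $\poly(n)$ time (deterministically via Shoup's algorithm, or by sampling polynomials and testing irreducibility), and once fixed, each column of $A_i$ is computed by a single field multiplication, which takes $\poly(n)$ time using the chosen polynomial representation. There is no real obstacle here; the only subtlety worth flagging is the choice of basis, but any basis works and the standard polynomial basis is the most convenient to describe and implement.
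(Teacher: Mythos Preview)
Your proposal is correct and is precisely the standard construction from \cite{DEOR04}: identify $\mathrm{GF}(2)^n$ with $\mathrm{GF}(2^n)$, take $A_i$ to be the matrix of multiplication by the $i$-th basis element, and observe that any nonzero subset sum corresponds to multiplication by a nonzero field element, hence is invertible. Note that the present paper does not actually prove this lemma; it is quoted from \cite{DEOR04} without proof, so there is nothing further to compare against.
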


\begin{definition}[strong blender of \cite{DEOR04}]\label{def:DEOR}
Let $n \ge m > 0$, and let $\set{A_i}_{i=1}^m$ be a set as above. The DEOR-extractor $E_D:\zo^n \times \zo^n
\rightarrow \zo^m$ is given by $E_D(x,y) = A_1x\cdot y, A_2x\cdot y,\ldots,A_mx\cdot y$.
\end{definition}

\paragraph{The XOR-Lemma:} Vazirani's XOR-Lemma \cite{Vaz87} relates the non-uniformity of a distribution to the non-uniformity of the characters
of the distribution, i.e., the XOR of certain bit positions. For the DEOR-extractor it allows to reduce the multi-bit
output case to the binary output case.
\begin{lemma}[Classical XOR-Lemma \cite{Vaz87,Gol95}]
For every $m$-bit random variable $Z$ \[
\norm{1}{Z - U_m}^2 \le \sum_{0 \neq S \in \zo^m} \norm{1}{(S\cdot Z) - U_1}^2. \]
\end{lemma}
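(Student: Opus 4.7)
My plan is to prove the XOR-Lemma by standard Fourier analysis on the group $\zo^m$, following Vazirani and Goldreich's original argument. Let $p(z) = \Pr[Z = z]$, and consider the zero-mean function $q(z) := p(z) - 2^{-m}$, so that $\norm{1}{Z - U_m} = \sum_z |q(z)|$. The whole proof boils down to combining Cauchy--Schwarz with Parseval, plus identifying the Fourier coefficients of $q$ with the biases of the characters $S \cdot Z$.

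First I would fix the Fourier convention $\widehat{f}(S) := \sum_z f(z) (-1)^{S \cdot z}$, under which Parseval reads $\sum_z f(z)^2 = 2^{-m} \sum_S \widehat{f}(S)^2$. A direct computation gives $\widehat{q}(0) = \sum_z q(z) = 0$, while for $S \neq 0$ the uniform part cancels since $\sum_z (-1)^{S \cdot z} = 0$, leaving $\widehat{q}(S) = \sum_z p(z) (-1)^{S \cdot z} = \E[(-1)^{S \cdot Z}]$. Next I would observe that this last quantity is exactly the bias of the bit $S \cdot Z$: writing $\alpha := \Pr[S \cdot Z = 0] - \Pr[S \cdot Z = 1]$, a one-line check shows $\norm{1}{(S \cdot Z) - U_1} = |\alpha| = |\widehat{q}(S)|$.

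With these identifications in place, the chain of inequalities is:
\[
\norm{1}{Z - U_m}^2 = \Big(\sum_z |q(z)|\Big)^2 \le 2^m \sum_z q(z)^2 = \sum_{S \in \zo^m} \widehat{q}(S)^2 = \sum_{0 \neq S \in \zo^m} \norm{1}{(S\cdot Z) - U_1}^2,
\]
where the first inequality is Cauchy--Schwarz (using that there are $2^m$ terms in the sum), the second equality is Parseval, and the last equality uses $\widehat{q}(0) = 0$ together with the bias identification from the previous paragraph. This gives the claimed bound.

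There is no real obstacle here: the lemma is classical and the only thing to be careful about is keeping the Fourier normalization consistent and noting that, in this paper's convention, $\norm{1}{\cdot}$ denotes the (unhalved) $\ell_1$ distance, so the bias of a single bit equals its $\ell_1$ distance to $U_1$ without an extra factor of two. The mild subtlety worth highlighting is that the lemma is tight only up to the Cauchy--Schwarz step, which is where the factor of $2^m$ enters; this is the same loss that later forces the quantum XOR-Lemma to incur a penalty exponential in either the output length or the storage size, as noted in the discussion of open problems.
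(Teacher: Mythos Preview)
Your proof is correct and is exactly the standard Fourier-analytic argument. Note, however, that the paper does not actually give its own proof of this classical lemma---it is simply stated with citation to \cite{Vaz87,Gol95}. That said, your approach (Cauchy--Schwarz to pass from $\ell_1$ to $\ell_2$, then Parseval to express the $\ell_2$ norm as the sum of squared Fourier coefficients, identified with the character biases) is precisely the template the paper explicitly invokes---``Following the proof of the classical XOR-Lemma in \cite{Gol95}, we first relate $\dnorm{1}{\cdot}$ to $\dnorm{2}{\cdot}$, and then view \ldots\ in the Hadamard (or Fourier) basis''---when proving its Classical-Quantum XOR-Lemma. So your argument matches both the cited source and the paper's own generalization of it.
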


This lemma is not immediately applicable in our scenario, as we need to take into account {\em quantum} side
information. For this, we need a slightly more general XOR-Lemma.

\begin{lemma}[Classical-Quantum XOR-Lemma]\label{lem:xor}\footnote{We thank Thomas Vidick for pointing out that we can
also have a bound in terms of $m$ and not only $d$.} Let $Z\rho_Z$ be an arbitrary cq-state, where $Z$ is an $m$-bit
classical random variable and $\rho_Z$ is of dimension $2^d$. Then \[ \norm{tr}{Z\rho_Z - U_m \rho_Z}^2 \le
2^{\min(d,m)} \cdot \sum_{0 \neq S \in \zo^m} \norm{tr}{(S\cdot Z)\rho_Z - U_1\rho_Z}^2.
\]
\end{lemma}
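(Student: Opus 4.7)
The plan is to carry out a matrix-valued Fourier analysis on the classical register, mimicking the classical proof of the XOR lemma but tracking trace norms of operators instead of absolute values of reals. First I would exploit the block-diagonal structure of the difference. Writing $p(z):=\Pr[Z=z]$ and $\sigma_z := p(z)\rho_z - 2^{-m}\rho_Z$, we have $Z\rho_Z - U_m\rho_Z = \sum_z \ketbra{z}{z}\otimes \sigma_z$, and since the trace norm of a block-diagonal matrix equals the sum of trace norms of the blocks, $\dnorm{1}{Z\rho_Z - U_m\rho_Z} = \sum_z \dnorm{1}{\sigma_z}$. An analogous expansion of $(S\cdot Z)\rho_Z - U_1\rho_Z$ shows that it too is block-diagonal, with two blocks equal to $\pm\tau^S$ where $\tau^S := \tfrac12\sum_z(-1)^{S\cdot z}p(z)\rho_z$, so $\norm{tr}{(S\cdot Z)\rho_Z - U_1\rho_Z} = \dnorm{1}{\tau^S}$.

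Next I would introduce matrix-valued Fourier coefficients $\hat\sigma(S) := 2^{-m}\sum_z (-1)^{S\cdot z}\sigma_z$ on the group $\zo^m$. The vanishing character sum $\sum_z (-1)^{S\cdot z} = 0$ for $S\neq 0$ together with $\sum_z\sigma_z = \rho_Z - \rho_Z = 0$ yields $\hat\sigma(0)=0$ and $\hat\sigma(S) = 2^{-m+1}\tau^S$ for $S\neq 0$. The inversion formula $\sigma_z = \sum_{S\neq 0}(-1)^{S\cdot z}\hat\sigma(S)$ and the matrix Parseval identity $\sum_z\dnorm{2}{\sigma_z}^2 = 2^m\sum_S\dnorm{2}{\hat\sigma(S)}^2$, both with respect to the Hilbert--Schmidt inner product on operators, then hold by the same character-orthogonality computations as in the scalar case.

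I would then derive two separate bounds and take the minimum. For the $2^m$ side, I would apply the triangle inequality to the inversion formula to get $\dnorm{1}{\sigma_z} \le 2^{-m+1}\sum_{S\neq 0}\dnorm{1}{\tau^S}$, sum over $z$ to obtain $\sum_z\dnorm{1}{\sigma_z} \le 2\sum_{S\neq 0}\dnorm{1}{\tau^S}$, and then apply Cauchy--Schwarz over the $2^m-1$ nonzero characters to pick up the factor $2^m$. For the $2^d$ side, I would use the dimension-dependent inequality $\dnorm{1}{M}\le \sqrt{2^d}\,\dnorm{2}{M}$ (valid since each $\sigma_z$ acts on a space of dimension $2^d$), followed by Cauchy--Schwarz over $z$, Parseval, and finally $\dnorm{2}{\tau^S} \le \dnorm{1}{\tau^S}$; this routes through $\dnorm{2}{\cdot}$ so that the $2^m$ from the $z$-sum cancels against Parseval's $2^{-m}$, leaving only the dimension factor $2^d$.

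The work is entirely in the bookkeeping. The main obstacle I anticipate is keeping the constants straight: there is a factor of $\tfrac12$ between $\norm{tr}{\cdot}$ and $\dnorm{1}{\cdot}$, a factor $2^{-m+1}$ relating $\hat\sigma(S)$ to $\tau^S$, and a factor $2^m$ from the outer Cauchy--Schwarz over $z$; these need to combine so that both routes land precisely on $2^{\min(d,m)}\sum_{S\neq 0}\norm{tr}{(S\cdot Z)\rho_Z - U_1\rho_Z}^2$ without a spurious constant, and in particular so that the bound degrades gracefully to the scalar XOR-Lemma of Vazirani in the limit where $\rho_z$ is a pure classical state.
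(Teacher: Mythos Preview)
Your proposal is correct and follows essentially the same approach as the paper: both proofs perform a Fourier/Hadamard transform on the classical register, pass through the Schatten $2$-norm (using $\dnorm{1}{M}\le\sqrt{\dim}\,\dnorm{2}{M}$ and Parseval) to obtain the $2^d$ factor, and use the triangle inequality over characters followed by Cauchy--Schwarz to obtain the $2^m$ factor. The only cosmetic difference is that the paper carries out the Hadamard transform as a unitary conjugation on the full $2^{m+d}$-dimensional cq-state, whereas you work directly with the $2^d$-dimensional blocks $\sigma_z$ and their matrix-valued Fourier coefficients $\hat\sigma(S)$; once one unwinds the block structure (as in the paper's Claim~\ref{claim:xor_pre} and the block-wise computation of $\dnorm{2}{\cdot}^2$), the two are the same calculation and the constants match as you anticipated.
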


\begin{proof}
Following the proof of the classical XOR-Lemma in \cite{Gol95}, we first relate $\dnorm{1}{Z\rho_Z - U_m \rho_Z}$ to
$\dnorm{2}{Z\rho_Z - U_m \rho_Z}$, and then view $Z\rho_Z - U_m \rho_Z$ in the Hadamard (or Fourier) basis, giving us
the desired result. We need the following simple claim.

\begin{claim}\label{claim:xor_pre}
For any Boolean function $f$, $\dnorm{1}{f(Z)\rho_Z - U_1\rho_Z} =
\dnorm{1}{\sum_z (-1)^{f(z)}p(z)\rho_z}$.
\end{claim}
\begin{proof}
Denote $\rho_b = \sum_{z:f(z)=b} p(z)\rho_z$ for $b=0,1$. Then $\rho_Z = \rho_0 + \rho_1$ and
\begin{align}
\dnorm{1}{f(Z)\rho_Z - U_1\rho_Z} & =
    \dnorm{1}{\ketbra{0}{0}\tensor\rho_0 + \ketbra{1}{1}\tensor\rho_1 - \half(\ketbra{0}{0} + \ketbra{1}{1})\tensor(\rho_0 + \rho_1)} \notag \\
 & = \half\dnorm{1}{\ketbra{0}{0}\tensor(\rho_0-\rho_1) + \ketbra{1}{1}\tensor(\rho_1-\rho_0)}  \notag  \\
 & = \dnorm{1}{\rho_0-\rho_1}=\dnorm{1}{\sum_z (-1)^{f(z)}p(z)\rho_z}. \label{eq:trace_one_bit}
\end{align}
\end{proof}
Let $\chi_S(z) = (-1)^{S \cdot z}$ for $S \in \zo^m$. Denote $D=2^d$, $M=2^m$, and $\sig_z = p(z)\rho_z - \frac{1}{M}\rho_Z$. Then \begin{align}
& \dnorm{1}{Z\rho_Z - U_m\rho_Z}^2 = \dnorm{1}{\sum_z \ketbra{z}{z}\tensor \sig_z}^2
 = \dnorm{1}{(H^{\otimes m} \tensor I_D)\left(\sum_z \ketbra{z}{z}\tensor \sig_z\right)(H^{\otimes m} \tensor I_D)}^2 \notag \\
& = \frac{1}{M^2}\cdot \dnorm{1}{\sum_{z,y,S} \ketbra{y}{S}\otimes \chi_S(z)\chi_y(z)\sig_z}^2
\le \frac{D}{M}\cdot \dnorm{2}{\sum_{z,y,S} \ketbra{y}{S}\otimes \chi_S(z)\chi_y(z)\sig_z}^2 \label{eq:XOR1},
\end{align}
where $H$ is the Hadamard transform.

\paragraph{Factor $D$:}
Using the fact that the $\dnorm{2}{\cdot}^2$ of a matrix is the sum of $\dnorm{2}{\cdot}^2$ of its ($D \times D$)
sub-blocks, together with $\chi_S(z)\chi_y(z)=\chi_{y+S}(z)$ and $\dnorm{2}{\cdot} \leq \dnorm{1}{\cdot}$,
\eqref{eq:XOR1} gives
\begin{align} \dnorm{1}{Z\rho_Z - U_m\rho_Z}^2 \le \frac{D}{M}\sum_y \sum_S
\dnorm{2}{\sum_z \chi_{y+S}(z)\sig_z}^2
 = D \sum_S \dnorm{2}{\sum_z \chi_S(z)\sig_z}^2 \le D
\sum_S \dnorm{1}{\sum_z \chi_S(z)\sig_z}^2. \label{eq:XOR2}
\end{align}
Using Claim \ref{claim:xor_pre} with $f(Z)={S \cdot Z}$, we get
\begin{align}  \sum_{S \neq 0} \dnorm{1}{(S\cdot Z)\rho_Z - U_1\rho_Z}^2 &= \sum_{S \neq 0} \dnorm{1}{\sum_z
\chi_S(z)p(z)\rho_z}^2
 = \sum_{S \neq 0} \dnorm{1}{\sum_z \chi_S(z)\sig_z}^2= \sum_S \dnorm{1}{\sum_z \chi_S(z)\sig_z}^2,
 \label{eq:XOR3}
\end{align}
where the second equality holds since $\chi_S$ is balanced, and the third since $\sum_z \sig_z = 0$. Combining Eqs.
\eqref{eq:XOR2} and \eqref{eq:XOR3} gives the desired result.

\paragraph{Factor $M$:} Restarting from the next-to-last step of \eqref{eq:XOR1}, using again $\chi_S(z)\chi_y(z)=\chi_{y+S}(z)$ and the
triangle inequality, we obtain
\begin{align*}
\dnorm{1}{Z\rho_Z - U_m\rho_Z}^2 &\leq \frac{1}{M^2} \cdot \left(\sum_S \dnorm{1}{\sum_{y}\ketbra{y}{S+y}\otimes
\left(\sum_z\chi_{S}(z)\sig_z\right)}\right)^2 \\
&\le \frac{1}{M} \cdot \sum_S \dnorm{1}{\sum_{y}\ketbra{y}{S+y}\otimes \left(\sum_z\chi_{S}(z)\sig_z\right)}^2 =
M\cdot\sum_S \dnorm{1}{\sum_z\chi_{S}(z)\sig_z}^2,
\end{align*}
where the last step follows from the observation that the matrices inside the norms are of the form $P \tensor B$ where
$P$ is a permutation matrix. In this case $\dnorm{1}{P \tensor B} = \dim(P)\cdot\dnorm{1}{B}=M \cdot \dnorm{1}{B}$. As
before, combining this with Eq. \eqref{eq:XOR3} gives the desired bound.
\end{proof}

\section{Communication Complexity and One-Bit Extractors}\label{sec:cc}

\subsection{Average case lower bound for inner product}

Cleve et al. \cite{CvDNT98} give a lower bound for the worst case one-way quantum communication complexity of inner
product with arbitrary prior entanglement. It is achieved by first reducing the problem of computing the inner product
to that of transmitting one input over a quantum channel, and then using an extended Holevo bound. Nayak and Salzman
\cite{NS06} obtained an optimal lower bound by replacing Holevo with a more ``mission-specific" bound:

\begin{theorem}[\cite{NS06}, Thm 1.3 and discussion thereafter]\label{thm:ns}
Let $X$ be an $n$-bit random variable with min-entropy $k$, and suppose Alice wishes to convey $X$ to Bob over a
one-way quantum communication channel using $b$ qubits. Let $Y$ be the random variable denoting
Bob's guess for $X$. Then \begin{enumerate} \itemsep=-1pt \item $\Pr[Y = X] \le 2^{-(k-b)}$, if the parties don't share
prior entanglement, and  \label{thm_item:ns_unen} \item $\Pr[Y = X] \le 2^{-(k-2b)}$. \label{thm_item:ns_en}
\end{enumerate}
\end{theorem}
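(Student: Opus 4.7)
I would handle the two parts separately, with part~\ref{thm_item:ns_en} reducing to part~\ref{thm_item:ns_unen} modulo a factor-of-$2$ loss that reflects the superdense-coding phenomenon.

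For part~\ref{thm_item:ns_unen}, my plan is a direct POVM calculation. Any one-way protocol without shared entanglement consists of an encoding $x \mapsto \sigma_x$, where $\sigma_x$ is a density operator on the $2^b$-dimensional space of the transmitted qubits, followed by a decoding POVM $\set{E_y}_{y \in \zo^n}$ on the same space. The success probability is $\Pr[Y = X] = \sum_x \Pr[X=x]\,\tr{E_x \sigma_x}$. Since $\sigma_x \preceq I$, each term satisfies $\tr{E_x \sigma_x} \le \tr{E_x}$; combining this with the min-entropy bound $\Pr[X=x] \le 2^{-k}$ and the POVM normalization $\sum_x E_x \preceq I$ on a $2^b$-dimensional space immediately gives $\Pr[Y=X] \le 2^{-k}\,\tr{\sum_x E_x} \le 2^{-k}\cdot 2^b$, as desired.

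For part~\ref{thm_item:ns_en}, the idea is to reduce to part~\ref{thm_item:ns_unen} but with an effective transmission dimension of $2^{2b}$ rather than $2^b$. Writing $\ket{\psi}_{AB}$ for the shared entangled state, Alice's unitary $U_x$ followed by sending $b$ qubits $C \subseteq A$ produces a family of encoded states $\sigma_x^{CB}$ on Bob's side whose reduced state on $B$ is independent of $x$. Intuitively, this constraint confines the distinguishable information in $\set{\sigma_x}$ to a $2^{2b}$-dimensional ``communicating'' subspace, mirroring the fact that shared entanglement together with $b$ qubits of quantum communication transmits at most $2b$ classical bits (superdense coding, tight via Holevo). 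Inserting this effective dimension into the calculation from part~\ref{thm_item:ns_unen} yields the claimed bound $\Pr[Y=X] \le 2^{2b-k}$.

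The main obstacle is making the reduction in part~\ref{thm_item:ns_en} quantitatively tight: the fact that $\mathrm{Tr}_C(\sigma_x^{CB})$ is $x$-independent does not by itself collapse the encoding to a $2^{2b}$-dimensional subspace, and a naive dimension-counting argument loses more than a factor of $2$ in the exponent. The right tool, as alluded to in the paragraph preceding the theorem, is a ``mission-specific'' strengthening of the Holevo bound that directly upper bounds the guessing probability of Bob's POVM in terms of $2^{-k}$ and $2^{2b}$, rather than passing through a mutual-information inequality (which would lose additional logarithmic factors in Fano's conversion to success probability). An alternative, slightly lossier, route would be to invoke the entanglement-assisted classical capacity formula for noiseless channels together with a Fano-type argument.
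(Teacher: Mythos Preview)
The paper does not prove this theorem: it is quoted from Nayak--Salzman~\cite{NS06} and used as a black-box input to Lemma~\ref{lem:enhanced_cleve}. There is thus no proof in the paper to compare your attempt against.

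On the merits of your proposal itself: your argument for part~\ref{thm_item:ns_unen} is correct and is the standard one-line proof --- the inequality $\tr{E_x\sigma_x}\le\tr{E_x}$ combined with $\sum_x E_x = I$ on a $2^b$-dimensional space gives exactly $2^{b-k}$. For part~\ref{thm_item:ns_en}, however, you do not actually give a proof; you give a plan and then, to your credit, an honest diagnosis of why the plan is incomplete. The observation that the $B$-marginal of $\sigma_x^{CB}$ is $x$-independent is indeed the structural starting point, but as you say it does not by itself compress the encoding to a $2^{2b}$-dimensional space, and there is no generic simulation of an entanglement-assisted $b$-qubit one-way protocol by an unassisted $2b$-qubit one that would let you invoke part~\ref{thm_item:ns_unen} directly. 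The Nayak--Salzman argument is precisely the ``mission-specific'' bound you allude to in your last paragraph: it analyzes the optimal guessing probability directly under the fixed-marginal constraint and recovers the exact factor $2^{2b}$. Your proposed fallback via entanglement-assisted capacity plus Fano would, as you anticipate, lose additive logarithmic terms and fail to give the tight bound stated. So part~\ref{thm_item:ns_en} remains a genuine gap in your write-up; filling it essentially requires reproducing the content of~\cite{NS06}.
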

Revisiting Cleve et al.'s reduction, we now show how to adapt it to flat sources, to the average case error and to the
linearly transformed inner product. The main challenge is to carefully treat the error terms so as to not cancel out
the (small) amplitude of the correct state.

\begin{lemma}\label{lem:enhanced_cleve}
Let $X,Y$ be flat sources over $n$ bits with min-entropies $k_1,k_2$, and $A,B$ full rank $n$ by $n$ matrices over
$GF(2)$. Let $P$ be a $b$ qubit one-way protocol for $(AX)\cdot (BY)$ with success probability $\half + \eps$. Then
\begin{enumerate}[\indent(a)] \itemsep=-1pt
\item $\eps \le 2^{-(k_1+k_2-2b-n+2)/2}$, if the parties share prior entanglement and \label{lem_item:cl_en}
\item $\eps \le 2^{-(k_1+k_2-b-n+2)/2}$ otherwise. \label{lem_item:cl_unen}
\end{enumerate}
\end{lemma}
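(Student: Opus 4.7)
The plan is to adapt the Cleve--van Dam--Nielsen--Tapp reduction: convert any good protocol $P$ for the twisted inner product into a protocol in which Bob recovers Alice's entire input $x$, and then invoke Theorem~\ref{thm:ns} to bound the recovery probability. As a first step I would eliminate $A$ and $B$: since they are full rank, $(Ax)\cdot(By)=x\cdot(A^TBy)$ and $y\mapsto A^TBy$ is a bijection on $\zo^n$, so $Y':=A^TBY$ is still a flat source with min-entropy $k_2$. Hence it suffices to prove the lemma assuming $P$ computes $x\cdot y$.

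The recovery protocol is as follows. Alice sends her $b$-qubit message $\ket{\phi_x}$ as in $P$. Bob prepares the uniform superposition $\ket{\psi_{S_Y}}=|S_Y|^{-1/2}\sum_{y\in S_Y}\ket{y}_B$ over the support $S_Y$ of $Y$, coherently runs $P$ (treating $B$ as the classical input), applies a $Z$ to the answer qubit, uncomputes $P$, and finally measures $B$ in the Hadamard basis. To analyse this, purify $P$ so that $W_y\ket{\phi_x}\ket{0}_{\mathrm{ans}} = \sqrt{p(x,y)}\,\ket{g_{x,y}}\ket{x\cdot y}+\sqrt{1-p(x,y)}\,\ket{b_{x,y}}\ket{\overline{x\cdot y}}$, where $p(x,y)$ is the success probability of $P$ on input $(x,y)$. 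A direct computation then yields $\bra{\phi_x,0}W_y^\dagger Z_{\mathrm{ans}}W_y\ket{\phi_x,0}=(-1)^{x\cdot y}\delta(x,y)$ with $\delta(x,y):=2p(x,y)-1$. Applying this coherently for each $y$, the global post-uncomputation state $\ket{\Phi_x}$ satisfies
\[
\big(\bra{x}\otimes\bra{\phi_x,0}\big)(H^{\otimes n}\otimes I)\ket{\Phi_x}=\sqrt{|S_Y|/2^n}\cdot\beta(x),\qquad \beta(x):=\frac{1}{|S_Y|}\sum_{y\in S_Y}\delta(x,y),
\]
the two phase factors $(-1)^{x\cdot y}$ (from the phase kickback on $B$ and from $H^{\otimes n}$ acting on $\ket{x}$) multiplying to $+1$. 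By the elementary bound $\|(\ket{x}\bra{x}\otimes I)\ket{v}\|^2\geq|\braket{x,u}{v}|^2$ applied with $\ket{u}=\ket{\phi_x,0}$, Bob outputs $x$ with probability at least $\beta(x)^2\cdot|S_Y|/2^n=\beta(x)^2\cdot 2^{k_2-n}$.

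To finish, average over $X$ and apply Jensen: $\E_X[\beta(X)^2]\geq(\E_{X,Y}[\delta(X,Y)])^2=4\eps^2$, giving $\Pr[\text{Bob recovers }X]\geq 4\eps^2\cdot 2^{k_2-n}$. Bounding this from above by Theorem~\ref{thm:ns}, namely $2^{b-k_1}$ without entanglement and $2^{2b-k_1}$ with entanglement, and rearranging yields both claimed bounds on $\eps$. The main obstacle will be the sign-bookkeeping in the amplitude calculation: one must verify that phase kickback followed by uncomputation produces precisely $(-1)^{x\cdot y}\delta(x,y)$ on each branch, so that $H^{\otimes n}$ on $B$ focuses amplitude at $\ket{x}$ rather than dispersing it. A secondary subtlety is that one cannot argue via $\sigma_B\succeq\beta(x)^2\ket{\tilde\psi_x}\bra{\tilde\psi_x}$ --- such a lower bound on the marginal of $B$ generally fails --- so the proof must work at the level of the pure global state $\ket{\Phi_x}$, using Cauchy--Schwarz with the ancilla also in the product state $\ket{\phi_x,0}$, as above.
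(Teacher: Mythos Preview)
Your proposal is correct and follows essentially the same route as the paper: the Cleve--van Dam--Nielsen--Tapp phase-kickback reduction (run Bob's side coherently over a uniform superposition on $S_Y$, extract $x$ via the Hadamard transform, lower-bound the hit probability by an inner product with the clean state), followed by Nayak--Salzman (Theorem~\ref{thm:ns}) and Jensen. The only cosmetic differences are that you eliminate $A,B$ up front by the change of variables $Y'=A^TBY$ (the paper instead modifies the final inverse transform), and you implement phase kickback via $Z$-then-uncompute rather than a $\ket{-}$ target with clean computation; both yield the identical amplitude $\sqrt{2^{k_2-n}}\cdot 2\eps_x$ and hence the same bounds.
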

\begin{proof}
Let us first consider the case $A=B=I$. Assume w.l.o.g. Bob delays his operations until receiving the message from
Alice and that in his first step he copies his input, leaving the original untouched throughout. Further assume Bob
outputs the result in one of his qubits.

For a fixed $x$, denote the success probability of $P$ by $\half + \eps_x$ ($\eps_x$ might be negative).
Denote Bob's state after receiving the message as $\ket{y}\ket{0}\ket{\sig_x}$, where $\sig_x$ is taken to contain
Alice's message and Bob's prior entangled qubits as required by the protocol (if present). The rest of the protocol is
now performed locally by Bob. We denote this computation $P_B$. After applying $P_B$, Bob's state is of the form
\[ \alpha_{x,y}\ket{y}\ket{x\cdot y}\ket{J_{x,y}} +
\beta_{x,y}\ket{y}\ket{\overline{x\cdot y}}\ket{K_{x,y}}, \] and by assumption, $\E_y \beta_{x,y}^2 = \half - \eps_x$.
Following the analysis in \cite{CvDNT98}, using {\em clean} computation, where the output is produced in a new
qubit (the leftmost), gives the state \[ \ket{z + x\cdot
y}\ket{y}\ket{0}\ket{\sig_x} + \sqrt{2}\beta_{x,y}\ket{M_{x,y,z}}, \] where $\ket{M_{x,y,z}} =
\left(\frac{1}{\sqrt{2}}\ket{z + \overline{x\cdot y}} - \frac{1}{\sqrt{2}}\ket{z + x\cdot y}\right)
P_B^{\dagger}\ket{y}\ket{\overline{x\cdot y}}\ket{K_{x,y}}$. Observe the following properties of $M$: 1.
$\ket{M_{x,y,0}} = -\ket{M_{x,y,1}}$ 2. As $y \in Y$ varies, the states $\ket{M_{x,y,z}}$ are orthonormal. 3. Since
$P_B^{\dagger}$ does not affect the first $n$ (so called input) qubits, $\ket{M_{x,y,z}}$ is orthogonal to states of
the form $\ket{a}\ket{y'}\tensor\ket{\cdot}$ for all $a \in \zo, y \in Y, y' \notin Y$.

We now use the following steps to transfer $X$ from Alice to Bob:
\begin{enumerate} \itemsep=-1pt
\item Bob prepares the state $\sqrt{2^{-k_2-1}}\cdot\sum_{y \in Y, a \in \zo}(-1)^{a}\ket{a}\ket{y}$. \item Alice and
Bob execute the clean version of $P$. \item Bob performs the Hadamard transform on each of his first $n+1$ qubits and
measures in the computational basis.
\end{enumerate}
After the second step, Bob's state is $\ket{\psi} = \ket{v} + \vec{e}$ where \begin{align*} \ket{v} =
\sqrt{2^{-k_2-1}}\sum_{y \in Y, a \in \zo}(-1)^{a+x\cdot y}\ket{a}\ket{y}\ket{0}\ket{\sig_x} & & \vec{e} =
\sqrt{2^{-k_2-1}}\sum_{y \in Y, a \in \zo}(-1)^a\sqrt{2}\beta_{x,y}\ket{M_{x,y,a}}.
 \end{align*}
By the properties of $\ket{M_{x,y,z}}$, $\left\|{\vec{e}}\right\| = 2\sqrt{\E_y\beta_{x,y}^2} = 2\sqrt{\half -
\eps_x}$. Since $\ket{v} + \vec{e}$ and $\ket{v}$ are normalized states, we can easily derive $\bra{v}(\ket{v} +
\vec{e}) = 2\eps_x$. Define
\[ \ket{\psi_0} = H^{\tensor n+1}\ket{1x}\tensor\ket{0}\ket{\sig_x} = \sqrt{2^{k_2-n}}\ket{v} + \sqrt{2^{-n-1}}\sum_{y
\notin Y, a\in\zo}(-1)^{a+x\cdot y}\ket{a}\ket{y}\ket{0}\ket{\sig_x},
\] and note that the second term is orthogonal to both $\ket{v}$ and $\vec{e}$. It follows that $\braket{\psi}{\psi_0}
= \sqrt{2^{k_2-n+2}}\eps_x$. Applying the Hadamard transform  in  Step 3. does not affect the inner product, and so Bob
will measure $\ket{1x}$ with probability $2^{k_2-n+2}\cdot\eps_x^2$. Applying Thm.~\ref{thm:ns}.\ref{thm_item:ns_unen}
and \ref{thm:ns}.\ref{thm_item:ns_en} along with Jensen's inequality now completes the proof.

For the general case where $A \neq I$ or $B \neq I$, we modify Step 3.~of the transmission protocol. Instead of the
Hadamard transform, Bob applies the inverse of the unitary transformation $\ket{z}\ket{x} \mapsto
\sqrt{2^{-n-1}}\cdot\sum_{y,a}(-1)^{za+(Ax)\cdot(By)}\ket{a}\ket{y}$. It is easy to check that this gives the desired
result.
\end{proof} 

\subsection{One bit extractor}\label{subsec:one_bit}

When the extractor's output is binary, distinguishing it from uniform is equivalent to computing the output on average.
This was shown by Yao \cite{Yao82} when the storage is classical and is trivially extended to the quantum setting. With
this observation, reformulating Lem.~\ref{lem:enhanced_cleve} in the language of trace distance yields a one bit
extractor.

\begin{corollary}\label{cor:ext_ip}
The function $E_{IP}(x,y) = x \cdot y$ is a $(k_1,k_2,\eps)$ extractor against $(b_1,b_2)$ (entangled) quantum storage
provided \begin{enumerate}[\indent(a)] \itemsep=-1pt
\item (entangled) $k_1+k_2-2\min(b_1,b_2) \ge n-2+2\logeps$,
\item (non-entangled) $k_1+k_2-\min(b_1,b_2) \ge n-2+2\logeps$.
\end{enumerate}
\end{corollary}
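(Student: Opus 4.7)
The plan is to reduce the extractor's trace-distance bound to a bound on the bias of a one-way quantum communication protocol for $x \cdot y$, and then apply Lem.~\ref{lem:enhanced_cleve}. By the observation on flat sources from Sec.~\ref{sec:prelim}, I may assume throughout that $X,Y$ are flat of min-entropies $k_1,k_2$.

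First, I invoke Yao's equivalence in its quantum form. A direct computation, essentially the same as Claim~\ref{claim:xor_pre} specialized to $f=E_{IP}$, shows
\[
\norm{tr}{E_{IP}(X,Y)\rho_{XY} - U_1\rho_{XY}} \;=\; \tfrac{1}{2}\dnorm{1}{\sigma_0-\sigma_1},
\]
where $\sigma_b = \sum_{x\cdot y = b} p(x,y)\,\rho_{xy}$. By Helstrom's theorem, this quantity equals $P_{\mathrm{opt}} - \tfrac12$, where $P_{\mathrm{opt}}$ is the maximum success probability of any POVM that, on input $\rho_{xy}$, predicts the bit $x\cdot y$. So to prove the corollary it suffices to upper-bound this prediction bias by $\eps$ under the stated parameter assumptions.

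Next, the prediction task is literally an SMP-style protocol for inner product: Alice encodes $x$ into $b_1$ qubits, Bob encodes $y$ into $b_2$ qubits (sharing prior entanglement in case (a), and no shared state in case (b)), and a referee guesses $x\cdot y$ from the joint storage. The key observation is that any such $(b_1,b_2)$ SMP protocol is simulated by a one-way protocol with only $b := \min(b_1,b_2)$ qubits of communication: the party with the shorter message sends, and the other party, who in the one-way model holds his own input in the clear, simulates his would-be SMP message internally and then applies the referee's POVM. Since $E_{IP}$ is symmetric in its arguments, either choice of sender works, and we can always take $b = \min(b_1,b_2)$.

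Finally, Lem.~\ref{lem:enhanced_cleve} applied with $A=B=I$ and communication $b = \min(b_1,b_2)$ bounds the one-way bias by $2^{-(k_1+k_2-2b-n+2)/2}$ in the entangled case and $2^{-(k_1+k_2-b-n+2)/2}$ otherwise. Combining with the previous two steps yields
\[
\norm{tr}{E_{IP}(X,Y)\rho_{XY}-U_1\rho_{XY}} \;\le\; 2^{-(k_1+k_2-2\min(b_1,b_2)-n+2)/2}
\]
in case (a), and analogously for (b); demanding that this be at most $\eps = 2^{-\logeps}$ is exactly the inequality stated in the corollary. There is no real obstacle here --- all the heavy lifting was done in Lem.~\ref{lem:enhanced_cleve} --- the only conceptual point to record carefully is the SMP-to-one-way collapse, which is what lets $\min(b_1,b_2)$ (rather than $\max$ or a sum) appear in the final bound.
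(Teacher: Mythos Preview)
Your proof is correct and follows essentially the same approach as the paper: both reduce the weak-extractor trace-distance bound to a one-way communication bound for inner product and then invoke Lem.~\ref{lem:enhanced_cleve}, with $\min(b_1,b_2)$ arising from choosing the favorable direction of communication. The only cosmetic difference is that the paper phrases the reduction as showing $E_{IP}$ is X-strong (Bob sends $b_2$ qubits) or Y-strong (Alice sends $b_1$ qubits), thereby recording inequalities~\eqref{pty:ext_cleve_x}--\eqref{pty:ext_cleve_y} for later use, whereas you state the same reduction directly as an SMP-to-one-way collapse.
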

\begin{proof}
With Yao's equivalence, Lem.~\ref{lem:enhanced_cleve}.\eqref{lem_item:cl_en} immediately gives
\begin{align}
\norm{tr}{(AX\cdot Y)\rho_{XY}X - U\rho_{XY}X} \le 2^{-{(k_1+k_2-2b_2-n+2)/2}} \label{pty:ext_cleve_x}\\
\norm{tr}{(AX\cdot Y)\rho_{XY}Y - U\rho_{XY}Y} \le 2^{-{(k_1+k_2-2b_1-n+2)/2}} \label{pty:ext_cleve_y}
\end{align}
for any full rank matrix $A$, and specifically for $A=I$. By the assumption on $\eps$, $E_{IP}$ is either Y-strong or
X-strong. Repeating this argument with Lem.~\ref{lem:enhanced_cleve}.\eqref{lem_item:cl_unen} gives the non-entangled
case.
\end{proof}

Recall (see Def.~\ref{def:ext_storage} and discussion thereafter) that one-way communication corresponds to the model
of {\em superstrong} extractors. It is not surprising then that Lem.~\ref{lem:enhanced_cleve} actually implies a
superstrong extractor. By choosing $\eps$ in the above proof of Cor.~\ref{cor:ext_ip} such that both inequalities
\eqref{pty:ext_cleve_x} and \eqref{pty:ext_cleve_y} are satisfied, where we replace $\rho_{xY}$ by $\rho^A_{xY}$ to
include Alice's complete state as well as Bob's entangled qubits and similarly for $\rho^B_{Xy}$, we obtain:

\begin{corollary}\label{cor:ext_ip_superstrong}
The function $E_{IP}(x,y) = x \cdot y$ is a $(k_1,k_2,\eps)$ {\em superstrong} extractor against $(b_1,b_2)$
(entangled) quantum storage provided \begin{enumerate}[\indent(a)] \itemsep=-1pt
\item (entangled) $k_1+k_2-2\max(b_1,b_2) \ge n-2+2\logeps$,
\item (non-entangled) $k_1+k_2-\max(b_1,b_2) \ge n-2+2\logeps$.
\end{enumerate}
\end{corollary}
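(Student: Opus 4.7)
The plan is to mirror the two-step proof of Corollary~\ref{cor:ext_ip} --- Yao's equivalence followed by the one-way reduction of Lemma~\ref{lem:enhanced_cleve} --- with two modifications. First, every occurrence of the distinguishing state $\rho_{XY}$ is to be replaced by $\rho^A_{XY}$ (when targeting X-superstrong) or $\rho^B_{XY}$ (when targeting Y-superstrong), so that the adversary also receives the full state of the matching party alongside the revealed source. Second, because a \emph{superstrong} extractor must be simultaneously X-superstrong and Y-superstrong, I need both of the resulting bounds to hold for the same $\eps$, which is what turns the $\min(b_1,b_2)$ of Cor.~\ref{cor:ext_ip} into a $\max(b_1,b_2)$ here.

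Concretely, for the X-superstrong direction I would start from $\norm{tr}{(X\cdot Y)\rho^A_{XY}X - U\rho^A_{XY}X}$. By Yao's equivalence this trace distance equals (up to the usual factor) the bias with which some quantum adversary, given $X$ together with $\rho^A_{XY}$, predicts $X\cdot Y$. The central step is to recast such an adversary as a one-way communication protocol: the party holding $Y$ plays the role of the sender in Lemma~\ref{lem:enhanced_cleve} (via the symmetry of inner product, swapping the roles of Alice and Bob), and transmits her $b_2$-qubit storage, while the party holding $X$ is the decoder. The decoder need not have Alice's complete state transmitted to her, since she can reproduce it locally from $X$ together with her share of the prior entanglement held as shared correlation with the sender; thus the communication that actually takes place is still only Bob's $b_2$ qubits. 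Lemma~\ref{lem:enhanced_cleve}\eqref{lem_item:cl_en} (resp.~\eqref{lem_item:cl_unen}) applied with $b=b_2$ and $A=B=I$ then yields the analogue of \eqref{pty:ext_cleve_x} with $\rho^A_{XY}$ in place of $\rho_{XY}$. The Y-superstrong direction is entirely symmetric and produces the analogue of \eqref{pty:ext_cleve_y} with $\rho^B_{XY}$ and $b=b_1$.

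To finish, I would choose $\eps$ so that both of the above bounds hold at once. The binding inequality is the one with the larger of $b_1,b_2$ in the exponent, so the required condition on the parameters becomes $k_1+k_2-2\max(b_1,b_2) \ge n-2+2\logeps$ in the entangled case and $k_1+k_2-\max(b_1,b_2) \ge n-2+2\logeps$ in the non-entangled case, which is precisely the stated claim. The only step that genuinely requires care is justifying that handing the distinguisher Alice's \emph{complete} state (rather than just her $b_1$-qubit storage) does not inflate the effective communication in the reduction; as noted above, this follows from the observation that the decoder already holds $X$ and may keep Alice's share of the original entanglement as prior correlation with the sender, so anything Alice could have produced is locally reproducible and costs no qubits of communication.
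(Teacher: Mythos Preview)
Your proposal is correct and follows essentially the same route as the paper: replace $\rho_{XY}$ by $\rho^A_{XY}$ (resp.\ $\rho^B_{XY}$) in inequalities~\eqref{pty:ext_cleve_x} and~\eqref{pty:ext_cleve_y}, observe that Lemma~\ref{lem:enhanced_cleve} already covers this since the one-way receiver holds her input and full share of the entanglement (so Alice's complete state is locally reproducible and costs no communication), and then require both bounds simultaneously, which forces $\max(b_1,b_2)$ in place of $\min(b_1,b_2)$. Your explicit articulation of why passing to $\rho^A_{XY}$ incurs no extra communication is exactly the point the paper leaves implicit.
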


We now show that the parameters of all our extractors are {\em tight} up to an additive constant.
For simplicity, assume first that the error $\eps$ is close to $1/2$, the sources are uniform and
$b_1=b_2:=b$. Cor.~\ref{cor:ext_ip} then states that $E_{IP}$ is an extractor as long as $b < n$ in the non-entangled
case and $b < n/2$ in the entangled case. Indeed, in the non-entangled case it is trivial to compute the inner product
in the SMP model (i.e., break the extractor) when $b \ge n$. With entanglement, $b \ge n/2$ suffices as demonstrated by
the following protocol, adapted from a protocol by de Wolf \cite{W:personal}.

\begin{claim}\label{claim:ip_smp_protocol}
The inner product function for $n$ bit strings is exactly computable in the SMP model with entanglement with $n/2+2$
qubits of communication from each party.
\end{claim}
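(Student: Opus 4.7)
My plan is to exhibit an explicit entangled SMP protocol by combining the ``reverse superdense coding'' idea with the integer identity already used informally in the introduction's discussion of guessing entropy. The starting observation is that $x\cdot y \equiv \tfrac12\bigl((|x|+|y|-|x\oplus y|)\bmod 4\bigr) \pmod 2$, so once the referee knows the three residues $|x|\bmod 4$, $|y|\bmod 4$ and $|x\oplus y|\bmod 4$ he can output $x\cdot y$ exactly. Alice simply appends the $2$ classical bits encoding $|x|\bmod 4$ to her message and Bob does the same with $|y|\bmod 4$, so the whole problem reduces to communicating $x\oplus y$ to the referee using only $n/2$ qubits from each party.

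For that quantum part I would have Alice and Bob pre-share $n/2$ EPR pairs $\ket{\Phi^+}^{\tensor n/2}$ (assume $n$ is even; the odd case only adds one bit of padding). Group the bits of $x$ into consecutive pairs $(x_{2i-1},x_{2i})$ and similarly for $y$. On her half of the $i$-th EPR pair Alice applies the Pauli $Z^{x_{2i-1}}X^{x_{2i}}$, and Bob applies $Z^{y_{2i-1}}X^{y_{2i}}$ on his half. Using the standard ``mirror'' identity $(P\tensor I)\ket{\Phi^+}=(I\tensor P^T)\ket{\Phi^+}$ for $P\in\{X,Z\}$ together with $XZ=-ZX$, the joint action on the $i$-th pair becomes, up to a global sign, $\bigl(I\tensor Z^{x_{2i-1}\oplus y_{2i-1}}X^{x_{2i}\oplus y_{2i}}\bigr)\ket{\Phi^+}$, i.e.\ the Bell state labelled by $\bigl((x\oplus y)_{2i-1},(x\oplus y)_{2i}\bigr)$.

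Alice and Bob then each forward their $n/2$ modified qubits plus their $2$ classical bits to the referee, for a total of $n/2+2$ qubits per party. The referee performs a Bell-basis measurement on each of the $n/2$ pairs, reads off all $n$ bits of $x\oplus y$, computes $|x\oplus y|\bmod 4$, and finally outputs $x\cdot y$ via the identity above.

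The only real subtlety I anticipate is the Pauli bookkeeping for the encoding step, in particular checking that the phase $(-1)^{x_{2i}\,y_{2i-1}}$ picked up when commuting $X^{x_{2i}}$ past $Z^{y_{2i-1}}$ is just a global phase on the $i$-th pair and is therefore invisible to the projective Bell measurement. Everything else is a direct calculation, and the odd-$n$ case is handled by padding one string with a fixed $0$ and passing one extra classical bit, which keeps the count at $\lceil n/2\rceil+O(1)$ qubits per party.
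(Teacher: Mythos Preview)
Your proposal is correct and essentially identical to the paper's own proof: both use the identity $x\cdot y=\tfrac12((|x|+|y|-|x\oplus y|)\bmod 4)$ to reduce to transmitting $x\oplus y$, then have each party apply the Pauli indexed by its two bits to its half of a shared EPR pair and invoke the mirror identity so the referee's Bell measurement reveals the XOR. The paper's write-up is a bit terser about the global-phase bookkeeping and the odd-$n$ padding, but the protocol and analysis are the same.
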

\begin{proof}
Let $x,y\in \zo^n$ be Alice and Bob's inputs. Since $x \cdot y = \half((|x|+|y|-|x \oplus y|)\bmod 4)$, it suffices to
show that the referee can compute $x \oplus y$ with $n/2$ qubits of communication from each party, or simply $x_1x_2
\oplus y_1y_2$ with one qubit of communication each.

Denote the Pauli matrices $\sig_{00} = I$, $\sig_{01} = Z$, $\sig_{10} = X$, $\sig_{11} = ZX$. Given a shared EPR pair,
Alice applies $\sig_{x_1x_2}$ to her qubit and sends it to the referee, and Bob does the same with
$\sig_{y_1y_2}$. Note that applying $\sig_{b_1b_2}$ to the first qubit has the same effect as applying it to the second
qubit. Further, $X$ is applied iff $b_1$ is 1 and $Z$ is applied iff $b_2$ is 1. Since two applications of $X$ ($Z$)
cancel each other out, we have that $X$ is applied to the first qubit iff $x_1+y_1=1$ and $Z$ is applied to the first
qubit iff $x_2+y_2=1$. The net effect on the EPR state is $\sig_{x_1x_2\oplus y_1y_2}\tensor I$. For each value of
$x_1x_2 \oplus y_1y_2$ this gives one of the orthogonal (completely distinguishable) Bell states.
\end{proof}

Showing that our results are tight for arbitrary $\eps$ is trickier. We show

\begin{lemma}\label{claim:ext_ip_tight}
If $E_{IP} = x\cdot y$ is a $(k_1,k_2,\eps)$ extractor against $(b_1, b_2)$ (entangled) storage then
\begin{enumerate}[\indent(a)] \itemsep=-1pt
\item (entangled) $k_1 + k_2 - 2\min(b_1,b_2) > n - 9 + 2\logeps$,
\item (non-entangled) $k_1 + k_2 - \min(b_1,b_2) > n - 5 + 2\logeps$.
\end{enumerate}
If $E_{IP}$ is {\em superstrong}, then
\begin{enumerate}[\indent(a)] \itemsep=-1pt
\item (entangled) $k_1 + k_2 - 2\max(b_1,b_2) > n - 9 + 2\logeps$,
\item (non-entangled) $k_1 + k_2 - \max(b_1,b_2) > n - 5 + 2\logeps$.
\end{enumerate}
\end{lemma}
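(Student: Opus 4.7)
The plan is to construct, for each of the four statements, an explicit attack exhibiting sources $X, Y$ and a storage strategy that breaks $E_{IP}$ with bias greater than $\eps$. By Yao's equivalence (as invoked in the proof of Cor.~\ref{cor:ext_ip}), this reduces to exhibiting a protocol in the appropriate communication model that predicts $x \cdot y$ with probability at least $\tfrac{1}{2} + \eps$.

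The common template is as follows. Set $\ell := k_1 + k_2 - n$. Design the supports of $X$ and $Y$ (by zero-padding) so that $x \cdot y$ collapses to an inner product on an $\ell$-bit overlap block whose bits are uniform on each side. Partition this overlap into a \emph{computed} sub-block of length $m$ and a \emph{residual} sub-block of length $\ell - m$. The adversaries use their storage to communicate the bits of the computed sub-block to the referee, who evaluates the partial IP exactly; the remaining bias then comes entirely from the residual sub-block, for which we substitute a tight Chor-Goldreich-style source pair with bias $\Omega(2^{-(\ell-m)/2})$. Optimizing $m$ against the storage budget yields an overall bias $\Omega(2^{(b-\ell)/2})$, where $b$ is the effective classical communication budget available to the attacker.

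The four cases differ only in how $b$ relates to $b_1, b_2$. In the non-entangled weak case (b), each party classically stores $\min(b_1, b_2)$ bits, so $b = \min(b_1, b_2)$. In the non-entangled superstrong case, the attacker reveals whichever source sits opposite the \emph{larger} storage (so the matching party's full state reveals no new information), and uses $\max(b_1, b_2)$ bits on the other side, yielding $b = \max(b_1, b_2)$. In each entangled variant we replace the classical storage step by the superdense-coding subprotocol of Claim~\ref{claim:ip_smp_protocol}, which transmits $2$ classical bits per entangled qubit up to a constant $+2$-qubit overhead per party. This doubles the effective budget (so $b$ becomes $2\,b_{\mathrm{qubits}} - 4$), explaining both the factor-of-two in front of $\min$/$\max$ in the lemma statements and the degradation from $-5$ to $-9$ in the additive constant (the $-4$ from $2\cdot(-2)$ plus the original $-5$).

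The main obstacle is supplying the CG88 residual construction with a small absolute constant: subspace-based flat sources yield only bias $2^{-(k_1'+k_2'-n')}$---the square of the CG88 upper bound---so one must use a non-subspace construction (for instance, sources supported on Hadamard-heavy sets, as in the original Chor-Goldreich paper) matching $2^{-(k_1'+k_2'-n')/2}$ up to a small absolute constant; a constant of at most $2^{5/2}$ suffices to land the lemma's $-5$ and $-9$. A secondary technical point is verifying that the partial IP revealed by the stored bits is independent of the residual IP, which we arrange by placing the computed and residual sub-blocks on disjoint coordinate positions, so the former is a deterministic function of information already disclosed and cannot cancel against the bias in the latter.
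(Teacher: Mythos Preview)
Your proposal is correct and follows essentially the same route as the paper: both proofs zero-pad $X,Y$ so that $x\cdot y$ collapses to an overlap block, split that block into a ``computed'' part (handled exactly via classical storage or the SMP protocol of Claim~\ref{claim:ip_smp_protocol}/superdense coding) and a ``residual'' part on which the biased Chor--Goldreich sources of Thm.~\ref{thm:hadamard_bias} are planted, and read off the constants. Your identification of the need for the Hadamard-heavy (non-subspace) CG88 sources and the $+2$-qubit SMP overhead as the origin of the $-5$ versus $-9$ constants matches the paper's explicit four-block decomposition $X=X_1X_2X_3X_4$.
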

\begin{proof} We give a slightly modified version of Proposition 10 in \cite{CG88}, taking into account quantum side information. We
need the following theorem.

\begin{theorem}[{\cite[Theorem 3]{CG88}}]\label{thm:hadamard_bias}
There exist independent random variables $X,Y$ on $l$ bits with min-entropy $l-3$ each\footnote{\cite{CG88} prove the
claim with slightly different parameters for arbitrary Boolean functions. Our modification is trivial.} such that
$\Pr[X\cdot Y = 0] > \half + 2^{-(l-1)/2}.$
\end{theorem}

We start in the weak extractor setting with entanglement.
We construct sources $X,Y$ with min-entropy $k_1,k_2$ and $(b_1,b_2)$ entangled quantum storage $\rho_{XY}$ for which
the error will be "large". Let $b = 2(\min(b_1,b_2)-2)$, and let $\Delta = k_1+k_2-n$. If $\Delta \le b$, we pick $X$
to be uniform on the first $k_1$ bits and $0$ elsewhere, $Y$ uniform on the last $k_2$ bits and $0$ elsewhere. The
inner product of $X,Y$ is then the inner product of at most $b$ bits, and can be computed exactly using the SMP
protocol in Claim \ref{claim:ip_smp_protocol} with $\min(b_1,b_2)$ qubits from each.

In the case $\Delta > b$, we define $X = X_1 X_2 X_3 X_4$ as follows: $X_1$ is uniform on $b$ bits, $X_2$ is uniform on
$k_1-\Delta-3$ bits, $X_3$ is the first $(\Delta+6-b, \Delta+3-b)$ source promised by Thm.~\ref{thm:hadamard_bias} (for
$l=\Delta+6-b$), and $X_4$ is constant $0^{n-k_1-3}$. Analogously, $Y = Y_1 Y_2 Y_3 Y_4$ is defined as: $Y_1$ is
uniform on $b$ bits, $Y_2$ is constant $0^{n-k_2-3}$, $Y_3$ is the second $(\Delta+6-b, \Delta+3-b)$ source promised by
Thm.~\ref{thm:hadamard_bias}, and $Y_4$ is uniform on $k_2-\Delta-3$ bits. It is easily verified that $H_\infty(X) \ge
k_1$ and $H_\infty(Y) \ge k_2$. Finally, we set $\rho_{XY}$ to be the entangled $(\min(b_1,b_2), \min(b_1,b_2))$
storage of the SMP protocol in Claim \ref{claim:ip_smp_protocol} allowing us to compute $x_1 \cdot y_1$ exactly, and
$M$ the measurement strategy of the referee. Applying Thm.~\ref{thm:hadamard_bias}, \[ \Pr[M(\rho_{XY}) = X\cdot Y] =
\Pr[X_1\cdot Y_1 = X\cdot Y] = \Pr[X_3\cdot Y_3 = 0] > \half+2^{-(\Delta+5-b)/2} \] and $\norm{tr}{(X\cdot Y)\rho_{XY}
- U\rho_{XY}} > 2^{-(k_1+k_2-b-n+5)/2}$.

In the non-entangled case, we simply set $b = \min(b_1,b_2)$ and replace the SMP protocol with a trivial protocol for
IP on $b$ bits.\footnote{In fact, this shows that our non-entangled extractor is tight even for {\em classical}
storage.}

In the superstrong case with entanglement, assume w.l.o.g. that $b_1 > b_2$ and choose $b = b_1/2$. We then let
$\rho_{xy}$ be  the entangled state that appears in the superdense coding protocol for $X_1$. Thus, exposing Bob's
state allows us to compute $X_1\cdot Y_1$ exactly. Without entanglement, we set $b = b_1$ and have Alice send $X_1$ to
Bob.
\end{proof}

\section{Many Bit Extractors}\label{sec:many}

Here we prove our main Theorems \ref{thm:deor_unen} and \ref{thm:deor_en}. First, using our quantum XOR-Lemma
\ref{lem:xor}, we obtain results in the {\em strong} case.

\begin{lemma}\label{lem:deor_xstrong_en}
$E_D$ is a $(k_1,k_2,\eps)$ {\em X-strong} extractor against $(b_1,b_2)$ (entangled) quantum storage provided
\begin{enumerate}[\indent(a)] \itemsep=-1pt
\item (entangled) $k_1+k_2-2b_2 \ge 2m+n-2+2\logeps$,
\item (non-entangled) $k_1+k_2-b_2 \ge 2m+n-2+2\logeps$.
\end{enumerate}
\end{lemma}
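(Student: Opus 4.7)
The plan is to combine the quantum XOR-Lemma (Lem.~\ref{lem:xor}) with the one-bit inner product bound from Lem.~\ref{lem:enhanced_cleve}, exploiting the DEOR property that every non-trivial XOR of output bits is itself a linearly transformed inner product with a full-rank matrix.

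First I would set $Z := E_D(X,Y)$ and apply Lem.~\ref{lem:xor} to the cq-state $Z\rho_{XY}X$, treating $\rho_{XY}X$ as the quantum-classical side information. Since we want the output length $m$ to be the dominant factor (and not the large dimension $2^{b_1+b_2+n}$ of the side information, which includes the exposed $n$-bit register $X$), I would use the bound with the factor $2^m$ rather than $2^d$; this is where it matters that Lem.~\ref{lem:xor} gives $\min(d,m)$. This reduces our task to bounding
\[
\sum_{0 \ne S \in \zo^m} \norm{tr}{(S\cdot Z)\rho_{XY}X - U_1 \rho_{XY}X}^2.
\]

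Next, for each nonzero $S$ the DEOR construction gives $S \cdot Z = (A_S X) \cdot Y$ with $A_S = \sum_{i\in S}A_i$ of full rank. Hence the inner summand is exactly a one-bit X-strong bias of a linearly transformed inner product, which is what Lem.~\ref{lem:enhanced_cleve} was designed to control. In the entangled case I apply Lem.~\ref{lem:enhanced_cleve}\eqref{lem_item:cl_en} (via the X-strong reformulation \eqref{pty:ext_cleve_x} in the proof of Cor.~\ref{cor:ext_ip}), obtaining for each $S$
\[
\norm{tr}{(A_S X\cdot Y)\rho_{XY}X - U_1\rho_{XY}X} \le 2^{-(k_1+k_2-2b_2-n+2)/2};
\]
in the non-entangled case I use Lem.~\ref{lem:enhanced_cleve}\eqref{lem_item:cl_unen}, which replaces $2b_2$ by $b_2$. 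As a preliminary reduction I would restrict to flat sources as justified right after Def.~\ref{def:ext_storage}, so that Lem.~\ref{lem:enhanced_cleve} applies directly.

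Finally I combine the two steps: there are at most $2^m$ summands, so the XOR-Lemma yields
\[
\norm{tr}{E_D(X,Y)\rho_{XY}X - U_m\rho_{XY}X}^2 \le 2^m \cdot 2^m \cdot 2^{-(k_1+k_2-2b_2-n+2)}
= 2^{2m - (k_1+k_2-2b_2-n+2)},
\]
so the trace distance is at most $\eps$ precisely when $k_1+k_2-2b_2 \ge 2m + n - 2 + 2\logeps$, matching part (a). Part (b) follows identically from the non-entangled bound. The only real subtlety is the bookkeeping: making sure the X-strong version of the one-bit lemma is applied (so the bound depends on $b_2$, not $b_1$), and choosing the $2^m$ rather than $2^d$ branch of the XOR-Lemma so the exposed source $X$ does not blow up the prefactor. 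Everything else is a routine substitution.
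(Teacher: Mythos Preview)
Your proposal is correct and follows essentially the same route as the paper: apply the quantum XOR-Lemma with the $2^m$ prefactor to reduce to the one-bit X-strong case, then invoke inequality~\eqref{pty:ext_cleve_x} (and its non-entangled analogue) for each full-rank $A_S$. Your added remarks about flat sources and about choosing the $2^m$ branch so the exposed register $X$ does not inflate the prefactor are useful clarifications of points the paper leaves implicit.
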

\begin{proof}
Recall that $E_D(x,y) = A_1x\cdot y, A_2x\cdot y,\ldots,A_mx\cdot y$ (see Def.~\ref{def:DEOR}). For
$0 \neq S \in \zo^m$, let $A_S = \sum_{i:S_i=1} A_i$ and note that $S\cdot E(x,y) = A_Sx\cdot y$. By the XOR-Lemma \ref{lem:xor},
\begin{align*}
& \norm{tr}{E(X,Y)\rho_{XY}X - U_m\rho_{XY}X} \le \sqrt{2^m\sum_{S \neq 0}\bignorm{tr}{(A_SX\cdot Y)\rho_{XY}X -
U_1\rho_{XY}X}^2}.
\end{align*}
The result then follows by Ineq. \eqref{pty:ext_cleve_x} in the proof of Cor.~\ref{cor:ext_ip} and its non-entangled
analogue.
\end{proof}

In a similar way, we also obtain a {\em Y-strong} extractor with analogous parameters. Following \cite{DEOR04}, we now
apply a seeded extractor against quantum storage (see Def.~\ref{def:seeded}) to the output of an X-strong (Y-strong)
extractor to obtain a two-source extractor with more output bits (see Lem.~\ref{lem:composition}).

\begin{definition}[\cite{TaShma09}]\label{def:seeded}
A function $E:\zo^n \times \zo^d \rightarrow \zo^m$ is a $(k,\eps)$ {\em seeded} extractor against $b$ quantum storage
if for any $n$-bit source $X$ with min-entropy $k$ and any $b$ qubit quantum storage $\rho_X$, \[
\norm{tr}{E(X,U_d)\rho_X - U_m\rho_X} \le \eps.
\]
\end{definition}

\begin{lemma}\label{lem:composition}
Let $E_B:\zo^n\times\zo^n \rightarrow \zo^d$ be a $(k_1,k_2,\eps)$ {\em X-strong} extractor against $(b_1,b_2)$
(entangled) quantum storage, and let $E_S:\zo^n\times\zo^d\rightarrow\zo^m$ and $E(x,y) = E_S(x, E_B(x,y))$.
\begin{enumerate}[\indent(a)] \itemsep=-1pt
\item (entangled) If $E_S$ is a $(k_1,\eps)$ {\em seeded extractor} against $b_1+b_2$ quantum storage then $E$ is a
$(k_1,k_2,2\eps)$ extractor against $(b_1,b_2)$ entangled quantum storage.
\item (non-entangled) If $E_S$ is a $(k_1,\eps)$ {\em seeded extractor} against $b_1$ quantum storage then $E$ is a
$(k_1,k_2,2\eps)$ extractor against $(b_1,b_2)$ non-entangled quantum storage.
\end{enumerate}
\end{lemma}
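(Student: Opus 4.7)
The plan is to insert the intermediate cq-state $E_S(X,U_d)\rho_{XY}$ and apply the triangle inequality, so that
\[
\norm{tr}{E(X,Y)\rho_{XY}-U_m\rho_{XY}} \le \norm{tr}{E(X,Y)\rho_{XY}-E_S(X,U_d)\rho_{XY}} + \norm{tr}{E_S(X,U_d)\rho_{XY}-U_m\rho_{XY}},
\]
and bound each of the two summands by $\eps$. The first bound will come from the X-strong guarantee of $E_B$ plus monotonicity of trace distance under quantum operations; the second bound will come from the hypothesis that $E_S$ is a seeded extractor, with the "storage" about $X$ interpreted differently in the entangled and non-entangled cases.

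For the first hop, X-strongness of $E_B$ gives $\norm{tr}{E_B(X,Y)\rho_{XY}X-U_d\rho_{XY}X}\le \eps$. The map on classical registers that sends $(x,z)\mapsto(x,E_S(x,z))$ and then discards the $z$-register is a valid quantum operation; applying it to the $X$-register and the register holding $E_B(X,Y)$ (resp.\ $U_d$), and then tracing out the $X$-register, gives by monotonicity
\[
\norm{tr}{E(X,Y)\rho_{XY}-E_S(X,U_d)\rho_{XY}} \le \eps.
\]

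For the second hop I would condition on $Y=y$ to peel off the classical $Y$-register. Because $X$ and $Y$ are independent random variables and $U_d$ is a fresh uniform seed, the source $X$ still has min-entropy $k_1$, and the remaining quantum side information about $X$ is the cq-state $\rho_{Xy}$. In case (a), the quantum register of $\rho_{XY}$ has total size $b_1+b_2$, so the hypothesis that $E_S$ is a $(k_1,\eps)$ seeded extractor against $b_1+b_2$ qubits yields $\norm{tr}{E_S(X,U_d)\rho_{Xy}-U_m\rho_{Xy}}\le\eps$ for every $y$, and averaging over $y$ preserves the bound. In case (b), $\rho_{xy}=\rho_x\otimes\rho_y$, so $\rho_{Xy}=(X\rho_X)\otimes\rho_y$ where the fixed $b_2$-qubit state $\rho_y$ is independent of $X$ and $U_d$; tensoring with an independent fixed state leaves trace distance unchanged, so the bound reduces to $\norm{tr}{E_S(X,U_d)\rho_X-U_m\rho_X}\le\eps$, which holds by the hypothesis using only $b_1$ qubits of storage. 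Combining the two hops via the triangle inequality yields the claimed $2\eps$.

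There is no serious obstacle; the only delicate point is the bookkeeping in the second hop, namely identifying the correct effective quantum side information about $X$ and noting that in the non-entangled case the product structure $\rho_{xy}=\rho_x\otimes\rho_y$ lets us discard Bob's $b_2$ qubits, so that only Alice's $b_1$ qubits of storage count against the seeded extractor.
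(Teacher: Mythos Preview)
Your proposal is correct and follows essentially the same approach as the paper: insert the intermediate state $E_S(X,U_d)\rho_{XY}$, bound the first hop via X-strongness of $E_B$ together with monotonicity of trace distance under the CPTP map computing $E_S$, bound the second hop via the seeded-extractor hypothesis, and combine by the triangle inequality. The paper's proof is terser---it applies the seeded extractor directly to the averaged storage $\rho_{xY}=\E_y\rho_{xy}$ rather than conditioning on $Y=y$ first---but your conditioning argument is equivalent and arguably a bit more explicit about why the bookkeeping works.
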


\begin{proof} Part (a):
$\norm{tr}{E_B(X,Y)\rho_{XY}X - U_d\rho_{XY}X} \le \eps$ and so $\norm{tr}{E_S(X,E_B(X,Y))\rho_{XY} -
E_S(X,U_d)\rho_{XY}} \le \eps$. But $\norm{tr}{E_S(X,U_d)\rho_{XY} - U_m\rho_{XY}} \le \eps$ by definition of $E_S$. The
result follows from the triangle inequality. For part (b) note that when the storage is non-entangled,
$\norm{tr}{E_S(X,U_d)\rho_X\rho_Y - U_m\rho_X\rho_Y} = \norm{tr}{E_S(X,U_d)\rho_X - U_m\rho_X}$, and it suffices
to require that $E_S$ be a seeded extractor against only $b_1$ quantum storage.
\end{proof}

A seeded extractor with almost optimal min-entropy loss is given in~\cite{DPVR09}. Their extractor is secure against
guessing entropy sources, and so trivially against quantum storage \cite{KT08} (see Sec.~\ref{sec:guessing} for details). We reformulate the seeded extractor in terms of Def.~\ref{def:seeded}.

\begin{corollary}[{\cite[Corrolary 5.3]{DPVR09}}]\label{cor:dpvr_seeded_ext}
There exists an explicit $(k,\eps)$ seeded extractor against $b$ quantum storage with seed length $d =
O(\log^3(n/\eps))$ and  $m = d+k-b-8\log(k-b)-8\logeps-O(1)$ output bits.
\end{corollary}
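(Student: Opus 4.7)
The plan is to derive this corollary by invoking the main result of De, Portmann, Vidick, and Renner \cite{DPVR09} and translating it from their stronger model (security against quantum knowledge / guessing entropy) into the weaker bounded-storage model of Def.~\ref{def:seeded}. No new extractor needs to be constructed; the whole content is the translation.

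First, I would recall the actual statement proved in \cite{DPVR09}: there is an explicit function $E_S : \zo^n \times \zo^d \to \zo^{m'}$ with seed length $d = O(\log^3(n/\eps))$ such that for every cq-state $X\rho_X$ with conditional min-entropy $H_\infty(X \mid \rho_X) \ge k'$, the output $E_S(X, U_d)$ is $\eps$-close in trace distance to $U_{m'}$ conditioned on $\rho_X$, where $m' = d + k' - 8\log k' - 8\logeps - O(1)$. This is precisely their ``strong extractor against quantum side information'' result, and since guessing entropy is a strictly more permissive model than bounded storage, $E_S$ automatically inherits security against every bounded-storage adversary for the corresponding choice of effective entropy parameter.

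Second, I would apply the standard quantum chain-rule estimate, stated for instance as Lem.~3.1 of \cite{TaShma09} (and already invoked earlier in this paper to compare our bounds to the classical DEOR analysis): if $X$ is any $n$-bit source of min-entropy at least $k$ and $\rho_X$ is any $b$-qubit storage produced from $X$, then
\[
H_\infty(X \mid \rho_X) \;\ge\; k - b.
\]
This is the one nontrivial input; however it is well known and proved in \cite{KT08,TaShma09} by a short argument bounding the optimal guessing probability of $X$ given the $b$-qubit register.

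Finally, setting $k' := k - b$ in the \cite{DPVR09} bound yields an extractor that meets Def.~\ref{def:seeded} with output length $m = d + k - b - 8\log(k-b) - 8\logeps - O(1)$ and the claimed seed length $d = O(\log^3(n/\eps))$, which is exactly the statement of the corollary. There is no genuine obstacle: the only thing to be careful about is that the \cite{DPVR09} theorem is phrased for a single cq-source while Def.~\ref{def:seeded} is phrased for an $n$-bit weak source together with a $b$-qubit register, and the chain-rule step is what reconciles the two.
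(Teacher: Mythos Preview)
Your proposal is correct and matches the paper's own treatment: the paper does not give a separate proof of this corollary but simply notes that the \cite{DPVR09} extractor is stated for guessing-entropy (equivalently, conditional min-entropy) sources and ``so trivially against quantum storage \cite{KT08}'', invoking exactly the chain-rule inequality $H_g(X\leftarrow\rho_X)\ge H_\infty(X)-b$ that you use to substitute $k'=k-b$. The only cosmetic difference is that the paper phrases things in terms of guessing entropy $H_g$ rather than $H_\infty(X\mid\rho_X)$, but it explicitly records their equivalence via \cite{KRS09}.
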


The proofs of Thms.~\ref{thm:deor_en} and \ref{thm:deor_unen} now follow by composing the explicit extractors of
Lem.~\ref{lem:deor_xstrong_en} and Cor.~\ref{cor:dpvr_seeded_ext} as in Lem.~\ref{lem:composition}.

\begin{proof-of-theorem}{\ref{thm:deor_en}}
$E_D$ is an X-strong extractor against entangled storage with $\half(k_1+k_2-2b_2-n-2\logeps)$ almost uniform output bits. This is larger than $O(\log^3(n/\eps))$ when $k_1+k_2-2b_2 > n+\Omega(\log^3(n/\eps))$, allowing us
to compose it with the seeded extractor secure against $b_1+b_2$ storage of Cor.~\ref{cor:dpvr_seeded_ext} on the
source $X$, obtaining $m = \half(k_1+k_2-2b_2-n-2\logeps) + (k_1-b_1-b_2) - 8\log(k_1-b_1-b_2) - 8\logeps - O(1)$. Similarly, $E_D$ is a Y-strong extractor, and can be composed with the seeded extractor on the source $Y$.
Choosing the better of the two, we prove the desired result.\footnote{We slightly sacrifice the parameters in
the formulation of the theorem to simplify the result.}
\end{proof-of-theorem}

\begin{proof-of-theorem}{\ref{thm:deor_unen}}
$E_D$ is an X-strong extractor against non-entangled storage with $\half(k_1+k_2-b_2-n-2\logeps)$ almost uniform output bits. This is larger than $O(\log^3(n/\eps))$ when $k_1+k_2-b_2 > n+\Omega(\log^3(n/\eps))$. Composing with the
seeded extractor secure against $b_1$ storage of Cor.~\ref{cor:dpvr_seeded_ext} on the source $X$ gives $m =
\half(k_1+k_2-b_2-n-2\logeps) + (k_1-b_1) - 8\log(k_1-b_1) - 8\logeps - O(1)$, and similarly for $Y$.
\end{proof-of-theorem}

\section{Guessing Entropy Adversaries}\label{sec:guessing}

In previous sections, we considered extractors in the presence of quantum adversaries with limited storage. A
stronger notion of quantum adversary was also studied in the literature~\cite{Ren05,KT08,FS08,DPVR09,TSSR10}.

\begin{definition}[\cite{KT08}] Let
$X\rho_X$ be an arbitrary cq-state. The guessing entropy of $X$ given $\rho_X$ is
\[ H_g(X \leftarrow \rho_X) := -\log \max_{M} \E_{x \leftarrow X}[\tr{M_x\rho_x}], \] where the maximum ranges over all
POVMs $M = \set{M_x}_{x \in X}$.
\end{definition}

Considering the probability distribution on the support of $X$ induced by measuring with $M$ on $\rho_X$ (which we
denote by $M(\rho_X)$), the above can be perhaps more easily understood as $H_g(X \leftarrow \rho_X) = -\log \max_{M} \Pr[M(\rho_X)=X]$.
Renner \cite{Ren05} considered sources with high {\em relative min-entropy}, rather than {\em guessing
entropy}. The two were shown to be equivalent \cite{KRS09}.

We can now define two-source extractors secure against non-entangled guessing entropy adversaries. Recall that in the
non-entangled case the bounded storage is given by $\rho_X \tensor \rho_Y$ (see Def.~\ref{def:storage}). Here, we place
a limit not on the amount of storage, but on the amount of information, in terms of guessing entropy, the adversaries
have on their respective sources. That is, we require that the guessing entropy of $X$ ($Y$) given $\rho_X$ ($\rho_Y$)
be high. We refer to the state $\rho_X \tensor \rho_Y$ as {\em quantum knowledge}, or if $\rho_x, \rho_y$ are classical
for every $x,y$, as {\em classical knowledge}.

\begin{definition}\label{def:ext_guessing}
A $(k_1, k_2, \eps)$ two-source extractor against quantum knowledge is a function $E:\zo^n \times \zo^n \rightarrow
\zo^m$ such that for any independent sources $X,Y$ and quantum knowledge $\rho_X\tensor\rho_Y$ with guessing
entropies $H_g(X \leftarrow \rho_X) \ge k_1$, $H_g(Y \leftarrow \rho_Y) \ge k_2$, we have
$\norm{tr}{E(X,Y)\rho_X\rho_Y - U_m\rho_X\rho_Y} \le \eps$.

The extractor is called {\em X-strong} if $\norm{tr}{E(X,Y)\rho_YX - U_m\rho_YX} \le \eps$. It is called {\em strong}
if it is both X-strong and Y-strong.
\end{definition}

It was shown that $H_g(X \leftarrow \rho_X) \ge H_\infty(X) - \log\dim(\rho_X)$ \cite{KT08}. Thus, we can view
adversaries with bounded quantum storage as a special case of general adversaries. In particular, a $(k_1-b_1, k_2-b_2,
\eps)$ extractor against quantum knowledge is trivially a $(k_1,k_2,\eps)$ extractor against {\em non-entangled}
$(b_1,b_2)$ storage.

\paragraph{One-bit output case:} K{\"o}nig and Terhal \cite{KT08} show that every classical one-bit output strong seeded extractor is also a strong extractor against quantum knowledge with roughly the same parameters. They reduce
the "quantum security" of the extractor to the "classical security", {\em irrespective} of the entropy of the seed. Informally, $\norm{tr}{E(X,Y)\rho_XY - U_1\rho_XY}$ is small if the statement is also true when $\rho_X$ is classical.
We give a version of their Lem. 2 with slightly improved parameters. The lemma shows that it suffices to prove security of an extractor with respect only to classical knowledge obtained by performing a Pretty Good Measurement (PGM) \cite{HW94} on arbitrary quantum knowledge. For a cq-state $Z\rho_Z$, a PGM is a POVM $\cE=\set{\cE_z}_{z\in Z}$
such that $\cE_z = p(z)\rho_Z^{-1/2}\rho_z\rho_Z^{-1/2}$.

\begin{lemma}\label{lem:kt}
Let $Z\rho_Z$ be a cq-state, and $f$ be a Boolean function. Then\footnote{$\cE(\rho_Z)$ is a classical probability
distribution and the trace distance $\norm{tr}{f(Z)\cE(\rho_Z) - U\cE(\rho_Z)}$ reduces to the classical variational
distance.} \[ \norm{tr}{f(Z)\rho_Z - U\rho_Z} \le \sqrt{\half\norm{tr}{f(Z)\cE(\rho_Z) - U\cE(\rho_Z)}}, \] where
$\cE=\set{\cE_z}_{z\in Z}$ is a Pretty Good Measurement, $\cE_z = p(z)\rho_Z^{-1/2}\rho_z\rho_Z^{-1/2}$.
\end{lemma}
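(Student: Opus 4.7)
The plan is to reformulate both sides of the inequality in terms of the single Hermitian operator $\sigma := \sum_z (-1)^{f(z)} p(z)\rho_z$. Claim~\ref{claim:xor_pre} already identifies the left-hand side as $\half \dnorm{1}{\sigma}$, and the same claim, applied to the classical (diagonal) cq-state obtained by replacing each $\rho_z$ with $\cE(\rho_z)$, identifies the quantity under the square root as $\half \sum_{z'} |\tr{\cE_{z'}\sigma}|$ (using that the trace norm of a diagonal matrix is the sum of absolute values of its entries, and that $\cE$ is linear, so $\cE(\sigma) = \sum_{z'}\tr{\cE_{z'}\sigma}\ketbra{z'}{z'}$). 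So the task reduces to proving the operator inequality $\dnorm{1}{\sigma}^2 \le \sum_{z'} |\tr{\cE_{z'}\sigma}|$.

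The crucial algebraic step exploits the specific form of the PGM: from $\cE_z = p(z)\rho_Z^{-1/2}\rho_z\rho_Z^{-1/2}$ one immediately gets the factorization $\sigma = \rho_Z^{1/2} A \rho_Z^{1/2}$ with $A := \sum_z (-1)^{f(z)}\cE_z$. I would then write $\sigma = \rho_Z^{1/4}\cdot(\rho_Z^{1/4}A\rho_Z^{1/4})\cdot\rho_Z^{1/4}$ and apply H\"older's inequality with Schatten exponents $(4,2,4)$; using $\dnorm{4}{\rho_Z^{1/4}} = (\tr{\rho_Z})^{1/4} = 1$, this gives $\dnorm{1}{\sigma} \le \dnorm{2}{\rho_Z^{1/4}A\rho_Z^{1/4}}$. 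Squaring and using cyclicity of the trace, the Schatten-2 norm simplifies to $\tr{A\sigma}$, which is manifestly nonnegative as the square of a Schatten-2 norm. Finally, expanding $A$ gives $\tr{A\sigma} = \sum_{z'}(-1)^{f(z')}\tr{\cE_{z'}\sigma}$, which the triangle inequality bounds by $\sum_{z'}|\tr{\cE_{z'}\sigma}|$, completing the chain $\dnorm{1}{\sigma}^2 \le \tr{A\sigma} \le \sum_{z'}|\tr{\cE_{z'}\sigma}|$.

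The only real trick is the choice of the H\"older split, Schatten exponents $(4,2,4)$ instead of the more familiar $(2,\infty,2)$: it is precisely this that turns a trace-norm problem into a Schatten-2 problem and thus produces the square-root loss visible in the lemma's statement. A minor technical point is that $\rho_Z^{-1/2}$, and hence $A$, is defined only on the support of $\rho_Z$; since $\sigma$ and all the $\rho_z$ are themselves supported there, all identities go through with the Moore--Penrose pseudo-inverse convention and $\sum_z \cE_z$ acting as the identity on the relevant subspace, with no change to the argument.
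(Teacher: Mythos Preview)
Your proof is correct and follows essentially the same route as the paper's: both reduce the left-hand side to $\half\dnorm{1}{\sigma}$ with $\sigma=\rho_0-\rho_1$, bound $\dnorm{1}{\sigma}^2$ by $\tr{\rho_Z^{-1/2}\sigma\rho_Z^{-1/2}\sigma}=\tr{A\sigma}$, and then identify this trace with (a lower bound on) the classical PGM distance. The only difference is packaging: where the paper invokes Renner's Lemma~5.1.3 as a black box for the $\dnorm{1}{\sigma}^2\le\tr{\rho_Z^{-1/2}\sigma\rho_Z^{-1/2}\sigma}$ step and then interprets the right-hand side via a guess-probability calculation, you obtain the same inequality directly by your H\"older $(4,2,4)$ split and finish by expanding $\tr{A\sigma}$ and applying the triangle inequality.
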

\begin{proof}
We need the following lemma.

\begin{lemma}[{\cite[Lemma 5.1.3]{Ren05}}]\label{lem:ren_cond_l2}
Let $S$ be a Hermitian operator and let $\sig$ be a nonnegative operator. Then $\norm{tr}{S} \le
\half\sqrt{\tr{\sig}\tr{\sig^{-1/2}S\sig^{-1/2}S}}.$
\end{lemma}

Denote $\rho = \rho_Z$, $\rho_b = \sum_{z:f(z)=b} p(z)\rho_z$ for $b=0,1$. Further define (informally) a POVM $M$ for
guessing $f$ from $\rho_Z$ by first applying $\cE$ to get $z$ and then computing $f(z)$. Then
\begin{align*}
\Pr[M(\rho_Z) = f(Z)] & = \sum_z p(z)\sum_{z':f(z')=f(z)}\tr{\cE_{z'}\rho_z} \\
& = \tr{\sum_{f(z')=f(z)}\rho^{-1/2}(p(z')\rho_{z'})\rho^{-1/2}(p(z)\rho_z)} \\
& = \tr{\rho^{-1/2}\rho_0\rho^{-1/2}\rho_0 + \rho^{-1/2}\rho_1\rho^{-1/2}\rho_1},
\end{align*}
and similarly $\Pr[M(\rho_Z) \neq f(Z)] = \tr{\rho^{-1/2}\rho_0\rho^{-1/2}\rho_1 +
\rho^{-1/2}\rho_1\rho^{-1/2}\rho_0}$. Hence \begin{align}
\abs{\Pr[M(\rho_Z) = f(Z)] - \Pr[M(\rho_Z) \neq f(Z)]} =  \tr{\rho^{-1/2}(\rho_0-\rho_1)\rho^{-1/2}(\rho_0-\rho_1)}. \label{eq:kt_lem2_1}
\end{align}
By Eq. \eqref{eq:trace_one_bit}, $\norm{tr}{f(Z)\rho_Z - U\rho_Z} = \norm{tr}{\rho_0 - \rho_1}$, and by
Lem.~\ref{lem:ren_cond_l2}, setting $S = \rho_0 - \rho_1$, $\sig = \rho$, \begin{align} \norm{tr}{\rho_0 - \rho_1} \le
\half\sqrt{\tr{\rho^{-1/2}(\rho_0-\rho_1)\rho^{-1/2}(\rho_0-\rho_1)}}. \label{eq:kt_lem2_2}
\end{align}
Combining Eq. \eqref{eq:kt_lem2_1} with Eq. \eqref{eq:kt_lem2_2} gives \[
\norm{tr}{f(Z)\rho_Z - U\rho_Z} \le \sqrt{\frac{1}{4}\abs{\Pr[M(\rho_Z) = f(Z)] - \Pr[M(\rho_Z) \neq f(Z)]}}.
\]
Finally, \[
\abs{\Pr[M(\rho_Z) = f(Z)] - \Pr[M(\rho_Z) \neq f(Z)]} \le 2\norm{tr}{f(Z)M(\rho_Z) - UM(\rho_Z)} \le
2\norm{tr}{f(Z)\cE(\rho_Z) - U\cE(\rho_Z)}, \]
as the left hand side describes a trivial strategy to guess $f$ from $M(\rho)$, giving the desired
result.
\end{proof}

\begin{corollary}
If $E$ is a classical one-bit output $(k_1,k_2,\eps)$ two-source extractor, then it is a
$(k_1+\logeps,k_2+\logeps,\sqrt{3\eps/2})$ two-source extractor against quantum knowledge.
\end{corollary}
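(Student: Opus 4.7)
The plan is to apply Lemma~\ref{lem:kt} with $Z=(X,Y)$, $\rho_Z = \rho_X \otimes \rho_Y$ and $f = E$, reducing the problem to bounding the trace distance after measuring with the Pretty Good Measurement $\cE$ for the cq-state $(X,Y)\rho_X\rho_Y$. The first step is to observe that because $\rho_X$ and $\rho_Y$ are in tensor product form and $X,Y$ are independent, $\cE$ factors as $\cE^X \otimes \cE^Y$, where $\cE^X, \cE^Y$ are the PGMs for $X\rho_X$ and $Y\rho_Y$. Hence the measurement outcomes $X'$ and $Y'$ are independent classical random variables, and the joint pair $(X,X')$ is independent from $(Y,Y')$.

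Next I would relate quantum guessing entropy to conditional classical min-entropy. Applying the PGM followed by a function is a valid POVM, so $H_g(X \leftarrow X') \ge H_g(X \leftarrow \rho_X) \ge k_1 + \logeps$, and unpacking the definition gives
\[
\E_{x' \leftarrow X'}\!\left[\max_x \Pr[X=x \mid X'=x']\right] \;=\; \E_{x'}\!\left[2^{-H_\infty(X \mid X'=x')}\right] \;\le\; \eps \cdot 2^{-k_1}.
\]
Markov's inequality then yields $\Pr_{x'}[H_\infty(X \mid X'=x') < k_1] \le \eps$, and similarly for $Y$. Let $\cF$ be the event that both conditional min-entropies exceed the target thresholds; then $\Pr[\cF^c] \le 2\eps$.

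The third step is to invoke the classical extractor. Conditioned on any $(x',y') \in \cF$, the sources $X \mid X'=x'$ and $Y \mid Y'=y'$ are independent with min-entropies at least $k_1, k_2$, so $E$ produces output $\eps$-close to uniform. Bounding the bad contribution trivially by $1$ gives
\[
\norm{tr}{E(X,Y)\cE(\rho_X\rho_Y) - U\cE(\rho_X\rho_Y)} \;\le\; \eps \cdot \Pr[\cF] + 1 \cdot \Pr[\cF^c] \;\le\; 3\eps,
\]
and Lemma~\ref{lem:kt} then gives $\norm{tr}{E(X,Y)\rho_X\rho_Y - U\rho_X\rho_Y} \le \sqrt{\tfrac{1}{2} \cdot 3\eps} = \sqrt{3\eps/2}$, as desired.

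The main conceptual step is the second one: exploiting that the PGM is a legitimate measurement strategy to transfer the quantum guessing entropy bound to a classical bound on $\E_{x'}[2^{-H_\infty(X\mid X'=x')}]$, after which Markov converts the average-case guarantee into the min-entropy condition the classical extractor requires. The tensor-product structure of $\rho_X \otimes \rho_Y$ is what makes everything separate cleanly; without it (e.g., with entanglement), the PGM would not factor and this reduction would fail, which is consistent with the discussion in the paper that the entangled guessing-entropy model is problematic.
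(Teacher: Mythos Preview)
Your proof is correct and follows essentially the same approach as the paper's: apply Lemma~\ref{lem:kt}, observe that the PGM on $\rho_X\otimes\rho_Y$ factors as the tensor product of the individual PGMs, and then reduce to the security of the classical extractor against \emph{classical} knowledge. The only difference is that the paper outsources the last step---showing a classical $(k_1,k_2,\eps)$ extractor is a $(k_1+\logeps,k_2+\logeps,3\eps)$ extractor against classical knowledge---to Proposition~1 of~\cite{KT08}, whereas you spell it out explicitly via the Markov argument on $\E_{x'}[2^{-H_\infty(X\mid X'=x')}]$.
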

\begin{proof}
By Lem.~\ref{lem:kt}, $\norm{tr}{E(X,Y)\rho_X\rho_Y - U\rho_X\rho_Y} \le \sqrt{\half\norm{tr}{E(X,Y)\cE(\rho_X\rho_Y) -
U\cE(\rho_X\rho_Y)}}$. A direct calculation shows that for every $x,y$, $\cE(\rho_x\tensor\rho_y) =
\cE_1(\rho_x)\tensor\cE_2(\rho_y)$, where $\cE_1,\cE_2$ are Pretty Good Measurements on states $X\rho_X, Y\rho_Y$
respectively. In other words, $\cE(\rho_X \otimes\rho_Y)$ induces a classical distribution $C_X\tensor C_Y$. Thus
\begin{align} \norm{tr}{E(X,Y)\rho_X\rho_Y - U\rho_X\rho_Y} \le \sqrt{\half\norm{tr}{E(X,Y)C_XC_Y - UC_XC_Y}},
\label{eq:ext_q_to_c}
\end{align}
where $H_g(X \leftarrow C_X) \ge H_g(X \leftarrow \rho_X)$, and the same for $Y$.

By the definition of (classical) guessing entropy, one can easily show that a classical $(k_1,k_2,\eps)$ two-source
extractor is a $(k_1+\logeps, k_2+\logeps, 3\eps)$ extractor against {\em classical knowledge} (for details see
Proposition 1 in \cite{KT08}). Ineq. \eqref{eq:ext_q_to_c} then gives the desired parameters against quantum knowledge.
\end{proof}

By a similar argument and following the proof of Theorem 1 in $\cite{KT08}$, we get

\begin{corollary}\label{cor:kt_one_bit_secure}
If $E$ is a classical one-bit output $(k_1,k_2,\eps)$ X-strong extractor, then it is a $(k_1,k_2+\logeps,\sqrt{\eps})$
X-strong extractor against quantum knowledge.
\end{corollary}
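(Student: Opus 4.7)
The plan is to mirror the structure of the preceding corollary, but invoke Lemma \ref{lem:kt} only on the $Y$-register since X-strongness exposes $X$ in the clear. First I would decompose the target quantity as
\[
\norm{tr}{E(X,Y)\rho_Y X - U_1 \rho_Y X} = \sum_x p(x)\, \norm{tr}{E(x,Y)\rho_Y - U \rho_Y},
\]
using that $X$ appears classically in the joint cq-state. For each fixed $x$, I would then apply Lemma \ref{lem:kt} with $Z = Y$ and Boolean function $f(y) = E(x,y)$ to obtain
\[
\norm{tr}{E(x,Y)\rho_Y - U \rho_Y} \le \sqrt{\half\, \norm{tr}{E(x,Y) C_Y - U C_Y}},
\]
where $C_Y := \cE(\rho_Y)$ is the classical distribution induced by the Pretty Good Measurement on $Y\rho_Y$. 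Crucially, $\cE$ is determined by $Y\rho_Y$ alone and does not depend on $x$, so a single classical distribution $C_Y$ can be used uniformly across all $x$.

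Pulling the average over $x$ inside the square root via Jensen's inequality then yields
\[
\norm{tr}{E(X,Y)\rho_Y X - U_1 \rho_Y X} \le \sqrt{\half\, \norm{tr}{E(X,Y) C_Y X - U_1 C_Y X}}.
\]
Since any classical post-measurement cannot decrease the optimal guessing probability, $H_g(Y \leftarrow C_Y) \ge H_g(Y \leftarrow \rho_Y) \ge k_2 + \logeps$, while $X$ retains min-entropy $k_1$ and is independent of $(Y, C_Y)$ by the product form of the knowledge in Def.~\ref{def:ext_guessing}.

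Finally I would invoke the X-strong classical-knowledge analogue of Proposition 1 in \cite{KT08}: a classical one-bit $(k_1, k_2, \eps)$ X-strong extractor is a $(k_1, k_2 + \logeps, 2\eps)$ X-strong extractor against classical knowledge. This follows by a Markov-type argument on the $Y$-side alone: with probability at least $1-\eps$ over $C_Y$, the conditional distribution $Y$ given $C_Y = c_y$ has min-entropy at least $k_2$, on which the X-strong guarantee of $E$ contributes at most $\eps$; the complementary $\eps$-measure "bad" event contributes trivially. Combining, $\sqrt{\half \cdot 2\eps} = \sqrt{\eps}$, as desired. The main technical subtlety is ensuring that X-strongness truly survives the classical-knowledge reduction, i.e.\ that the conditioning event on $C_Y$ does not correlate with $X$; this is exactly the independence baked into the $\rho_X \otimes \rho_Y$ structure of non-entangled knowledge.
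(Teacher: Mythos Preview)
Your proposal is correct and follows essentially the same approach the paper intends: the paper's one-line justification ``by a similar argument and following the proof of Theorem~1 in~\cite{KT08}'' unpacks to precisely the steps you give---decompose over the exposed classical register $X$, apply Lemma~\ref{lem:kt} on the $Y$-side (noting the PGM is independent of $x$), average back via concavity of the square root, and finish with the Markov argument on $C_Y$ to reduce to the classical X-strong guarantee. The only point worth making explicit in your write-up is that $H_g(X\leftarrow\rho_X)\ge k_1$ implies $H_\infty(X)\ge k_1$ (since ignoring $\rho_X$ is a valid guessing strategy), which is what legitimizes invoking the classical X-strong hypothesis on $X$.
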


\paragraph{The multi-bit output case:} We now show how to apply the results in the one-bit case, together with our
XOR-Lemma~\ref{lem:xor}, to show security in the multi-bit case, proving Thm.~\ref{thm:deor_knowledge}.

By Ineq.~\eqref{pty:ext_cleve_x} in the proof of Cor.~\ref{cor:ext_ip}, inner product is a {\em classical} X-strong
extractor with error $\eps \leq 2^{-(k_1+k_2-n+2)/2}$. Plugging this into Cor.~\ref{cor:kt_one_bit_secure} we obtain

\begin{corollary}\label{cor:ext_ip_knowledge}
The function $E_{{IP}_A}(x,y)=Ax \cdot y$, for any full rank matrix $A$, is a $(k_1,k_2,\eps)$ X-strong (Y-strong)
extractor against quantum knowledge provided that $k_1 + k_2 \ge n - 2 + 6\logeps$.
\end{corollary}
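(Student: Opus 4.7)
The plan is to combine the classical X-strong bound for $E_{IP_A}$ coming from inequality \eqref{pty:ext_cleve_x} with the quantum-knowledge lifting of Cor.~\ref{cor:kt_one_bit_secure}. Specializing the proof of Cor.~\ref{cor:ext_ip} to the case $b_2 = 0$ (so that \eqref{pty:ext_cleve_x} becomes a purely classical statement about flat sources), we obtain that for any flat sources $X,Y$ with min-entropies $k_1',k_2'$ and any full rank $A$, $\norm{tr}{(AX\cdot Y)X - U_1 X} \le 2^{-(k_1'+k_2'-n+2)/2}$. Hence $E_{IP_A}$ is a classical $(k_1',k_2',\eps')$ X-strong extractor whenever $k_1'+k_2' \ge n - 2 + 2\log(1/\eps')$.

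Next I feed this classical extractor into Cor.~\ref{cor:kt_one_bit_secure}, which promotes any classical one-bit $(k_1',k_2',\eps')$ X-strong extractor to a $(k_1', k_2' + \log(1/\eps'), \sqrt{\eps'})$ X-strong extractor against quantum knowledge. To match the target error, set $\eps' = \eps^2$; then $\sqrt{\eps'} = \eps$ and $\log(1/\eps') = 2\logeps$. Relabeling the parameters of the quantum-knowledge statement as $k_1 = k_1'$ and $k_2 = k_2' + 2\logeps$, the classical precondition $k_1' + k_2' \ge n - 2 + 2\log(1/\eps') = n - 2 + 4\logeps$ translates into $k_1 + k_2 \ge n - 2 + 6\logeps$, exactly as claimed. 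The Y-strong case follows symmetrically by invoking \eqref{pty:ext_cleve_y} in place of \eqref{pty:ext_cleve_x}.

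The argument is a short composition of two tools already in hand, so there is no real technical obstacle. The only point requiring care is the bookkeeping: the entropy loss of $\log(1/\eps')$ in the PGM reduction underlying Cor.~\ref{cor:kt_one_bit_secure}, combined with the square-root blow-up of the error, is what turns the factor $2\logeps$ appearing in the purely classical one-bit Cor.~\ref{cor:ext_ip} into the $6\logeps$ of the present statement.
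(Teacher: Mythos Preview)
Your proposal is correct and follows exactly the paper's approach: invoke inequality~\eqref{pty:ext_cleve_x} with $b_2=0$ to obtain the classical X-strong bound for $E_{IP_A}$, then lift it to quantum knowledge via Cor.~\ref{cor:kt_one_bit_secure}, doing the same parameter bookkeeping (set $\eps'=\eps^2$ and absorb the $\log(1/\eps')$ entropy loss) to arrive at $k_1+k_2\ge n-2+6\logeps$. The paper states this in a single sentence before the corollary; you have simply spelled out the arithmetic.
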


We now repeat the steps performed in Sec.~\ref{sec:many} in the setting of non-entangled guessing entropy adversaries
to obtain a multi-bit extractor against quantum knowledge. In exactly the same fashion as in the proof of
Lem.~\ref{lem:deor_xstrong_en} we use the XOR-Lemma~\ref{lem:xor} to reduce the security of $E_D$ to the strong one-bit
case of Cor.~\ref{cor:ext_ip_knowledge}.

\begin{lemma}\label{lem:deor_xstrong_knowledge}
$E_{D}$ is a $(k_1,k_2,\eps)$ X-strong (Y-strong) extractor against quantum knowledge provided that $k_1 + k_2 \ge 6m +
n - 2 + 6\logeps$.
\end{lemma}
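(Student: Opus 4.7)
The plan is to mirror the proof of Lemma~\ref{lem:deor_xstrong_en} exactly, with Corollary~\ref{cor:ext_ip_knowledge} playing the role of Inequality~\eqref{pty:ext_cleve_x}. That is, reduce the multi-bit guessing-entropy analysis to the one-bit inner-product extractor via our quantum XOR-Lemma, and use the special DEOR structure of $E_D$ to express each XOR of output bits as a linearly-transformed inner product.

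Concretely, I would proceed as follows. First, view the extractor's output register together with the exposed source as a cq-state: set $Z = E_D(X,Y)$ and treat $\rho_Y X$ (the side-information register $\rho_Y$ together with the exposed classical copy of $X$) as the quantum side information in Lemma~\ref{lem:xor}. Applying the $2^m$ form of the XOR-Lemma gives
\[
\norm{tr}{E_D(X,Y)\rho_Y X - U_m \rho_Y X}^2 \;\le\; 2^{m}\sum_{0\neq S\in\zo^m}\bignorm{tr}{(S\cdot E_D(X,Y))\rho_Y X - U_1 \rho_Y X}^2.
\]
Next, invoke the defining property of the DEOR matrices $\{A_i\}$: for every nonempty $S\subseteq[m]$, the matrix $A_S=\sum_{i\in S}A_i$ has full rank and $S\cdot E_D(x,y) = A_S x\cdot y$. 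Hence every summand on the right-hand side is the trace distance of the one-bit extractor $E_{IP_{A_S}}(x,y)=A_S x\cdot y$ from uniform conditioned on $\rho_Y X$, exactly as in the setting of Corollary~\ref{cor:ext_ip_knowledge}.

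Then I would apply Corollary~\ref{cor:ext_ip_knowledge} (the X-strong inner-product extractor against quantum knowledge) to each summand, which yields a bound of the form $2^{-(k_1+k_2-n+2)/3}$ on each squared trace distance whenever $k_1+k_2\ge n-2$. Summing over the at most $2^m$ nonzero $S$ and pulling out the $2^m$ XOR-Lemma factor produces $2^{2m}\cdot 2^{-(k_1+k_2-n+2)/3}$ on the right-hand side. Demanding this be at most $\eps^2$ and rearranging gives precisely the stated hypothesis $k_1+k_2\ge 6m+n-2+6\logeps$. The Y-strong case is identical after swapping the roles of $X$ and $Y$ and invoking the Y-strong part of Corollary~\ref{cor:ext_ip_knowledge}.

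There is no serious obstacle here beyond bookkeeping: the only subtlety is confirming that it is legitimate to apply the XOR-Lemma with $\rho_Y X$ (a cq-state itself) as the side-information register, which is immediate since Lemma~\ref{lem:xor} is stated for an arbitrary density matrix and classical registers can be absorbed without issue. The factor-of-three deterioration in the exponent (compared to Lemma~\ref{lem:deor_xstrong_en}) comes entirely from the square-root loss in Lemma~\ref{lem:kt} that is already built into Corollary~\ref{cor:ext_ip_knowledge}, so no further care is needed.
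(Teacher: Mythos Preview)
Your proposal is correct and matches the paper's proof essentially line for line: apply the $2^m$ form of the XOR-Lemma with side information $\rho_Y X$, rewrite each XOR as $A_S x\cdot y$ via the DEOR property, bound each one-bit term by Corollary~\ref{cor:ext_ip_knowledge}, and collect the factors. The only cosmetic difference is that the paper takes the square root first and writes the final bound as $2^m\cdot 2^{-(k_1+k_2-n+2)/6}\le\eps$, whereas you keep everything squared until the end; the arithmetic is identical.
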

\begin{proof}
By the XOR-Lemma~\ref{lem:xor} and Cor.~\ref{cor:ext_ip_knowledge},
\begin{align*}
\norm{tr}{E(X,Y)\rho_YX - U_m\rho_YX} \le \sqrt{2^m\sum_{S \neq 0}\bignorm{tr}{(A_SX\cdot Y)\rho_YX -
U_1\rho_YX}^2} \le 2^m\cdot2^{-(k_1+k_2-n+2)/6}.
\end{align*}
\end{proof}

To obtain our final result, we now compose our strong extractor with a seeded extractor against quantum knowledge.

\begin{lemma}
Let $E_B:\zo^n\times\zo^n \rightarrow \zo^d$ be a $(k_1,k_2,\eps)$ {\em X-strong} extractor against quantum knowledge
and let $E_S:\zo^n\times\zo^d\rightarrow\zo^m$ be a $(k_1,\eps)$ {\em seeded extractor} against quantum
knowledge\footnote{For a formal definition see~\cite{DPVR09}.}. Then $E(x,y) = E_S(x,
E_B(x,y))$ is a $(k_1,k_2,2\eps)$ extractor against quantum knowledge.
\end{lemma}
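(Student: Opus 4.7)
The plan is to follow the same triangle-inequality strategy used in part (a) of Lem.~\ref{lem:composition}, but in the guessing entropy setting. The argument decomposes the distance from uniform into two pieces: one bounded by the X-strong guarantee on the outer extractor $E_B$, the other by the seeded extractor guarantee on $E_S$.

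First, by the X-strong property of $E_B$ against quantum knowledge, $\norm{tr}{E_B(X,Y)\rho_Y X - U_d \rho_Y X} \le \eps$. Since $\rho_X$ can be produced from the classical $X$ register by a controlled state preparation (a CPTP map that reads $x$ and writes $\rho_x$), appending it preserves trace distance. Likewise, applying the controlled unitary that computes $E_S(x,\cdot)$ on the seed register (conditioned on $X$) preserves trace distance, and then discarding $X$ only decreases it. This chain of operations yields
\[
\norm{tr}{E(X,Y)\rho_X\rho_Y - E_S(X,U_d)\rho_X\rho_Y} \le \eps.
\]

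Second, invoke the seeded extractor guarantee on $E_S$: since $H_g(X \leftarrow \rho_X) \ge k_1$ and $U_d$ is a uniform seed independent of $(X,\rho_X,\rho_Y)$, we have $\norm{tr}{E_S(X,U_d)\rho_X - U_m\rho_X} \le \eps$. Because $Y$ is independent of $X$ and $\rho_Y$ depends only on $Y$, the state $\rho_Y$ is an independent fixed quantum register with respect to the left side, so tensoring with $\rho_Y$ preserves trace distance:
\[
\norm{tr}{E_S(X,U_d)\rho_X\rho_Y - U_m\rho_X\rho_Y} \le \eps.
\]
Combining the two inequalities with the triangle inequality gives the claimed $2\eps$ bound.

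The only subtle point is to verify that each adjunction or discarding step is really a valid CPTP map in the quantum knowledge model: preparing $\rho_X$ from classical $X$, computing $E_S(x,\cdot)$ on the seed register, discarding the $X$ register, and tensoring with the independent $\rho_Y$. Because $X$ and $Y$ are independent and the knowledge factors as $\rho_X \otimes \rho_Y$, all of these are legitimate trace-preserving maps, so no additional quantum structure (such as entanglement between adversaries) can obstruct the decomposition, and the argument goes through exactly as in the bounded storage analogue.
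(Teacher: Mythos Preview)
Your proof is correct and follows essentially the same approach as the paper's, which simply states that the result is ``immediate from the extractor definitions and the triangle inequality.'' You have merely spelled out in detail the CPTP-map justifications (preparing $\rho_X$ from the classical $X$ register, applying $E_S$, discarding $X$, and tensoring with the independent $\rho_Y$) that the paper leaves implicit.
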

\begin{proof}
Immediate from the extractor definitions and the triangle inequality.
\end{proof}

\begin{corollary}[{\cite[Corrolary 5.3]{DPVR09}}]\label{cor:dpvr_ext_knowledge}
There exists an explicit $(k,\eps)$ seeded extractor against quantum knowledge with seed length $d =
O(\log^3(n/\eps))$ and $m = d+k-8\log k-8\logeps-O(1)$.
\end{corollary}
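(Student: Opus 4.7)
The plan is to follow the quantum-secure Trevisan-style construction and reconstruction-paradigm analysis due to De, Portmann, Vidick, and Renner. Concretely, I would take the extractor $E_{\mathrm{Tre}}(x,s) = \bigl(C(x)_{s|_{S_1}}, \dots, C(x)_{s|_{S_m}}\bigr)$, where $C:\zo^n\to\zo^{\bar n}$ is a list-decodable binary code (for instance Reed--Muller concatenated with Hadamard), $S_1,\dots,S_m\subseteq[d]$ is a Raz--Reingold--Vadhan weak combinatorial design, and $s|_{S_i}$ is the restriction of the seed to the coordinates in $S_i$, interpreted as an index in $[\bar n]$. Instantiating the weak design and code so that $|S_i|=O(\log(n/\eps))$ and $\bar n=\poly(n/\eps)$ gives seed length $d=O(\log^3(n/\eps))$.

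I would prove security by contrapositive. Suppose a cq-adversary $\rho_X$ with $H_g(X\leftarrow\rho_X)\ge k$ $\eps$-distinguishes $E_{\mathrm{Tre}}(X,U_d)\rho_X$ from $U_m\rho_X$. A standard hybrid argument isolates an index $i^*$ and produces a quantum procedure that, given $\rho_X$, the seed $s$, and the first $i^*-1$ output bits as advice, predicts $C(X)_{s|_{S_{i^*}}}$ with bias $\Omega(\eps/m)$. The key quantum tool is a quantum Goldreich--Levin theorem (a hardcore-bit lemma against quantum side information): such a bitwise predictor can be turned into a quantum algorithm that outputs a string within Hamming distance $(1/2-\Omega(\eps/m))\bar n$ of $C(X)$, and list-decoding of $C$ then recovers $X$ using $O(\log(m/\eps))$ further bits of classical advice to pick the correct list element.

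The main obstacle is the entropy accounting across the $m$ hybrids. The weak design property of the $S_i$ is exactly what is needed so that conditioning on the first $i^*-1$ output bits (viewed as advice about $s|_{S_{i^*}}$) contributes only additively across hybrids, rather than multiplicatively. Summing the per-step advice, the Goldreich--Levin overhead, and the list-decoding advice bounds the total classical side-information used in the reconstruction by roughly $(m-d) + 8\log k + 8\logeps + O(1)$ bits, on top of $\rho_X$. The resulting procedure then guesses $X$ with probability at least $2^{-((m-d) + 8\log k + 8\logeps + O(1))}$, which contradicts $H_g(X\leftarrow\rho_X)\ge k$ precisely when $m \le d + k - 8\log k - 8\logeps - O(1)$, yielding the stated output length. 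The hard part is the uniform tracking of error and advice through the hybrids; everything else reduces to known combinatorial designs, list-decodable codes, and the quantum Goldreich--Levin lemma.
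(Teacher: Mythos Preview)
The paper does not prove this corollary at all: it is simply quoted verbatim as \cite[Corollary 5.3]{DPVR09} and used as a black box. So there is no ``paper's own proof'' to compare against; your proposal is really a sketch of the proof from \cite{DPVR09} itself, not of anything in the present paper.

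That said, your outline of the \cite{DPVR09} argument is broadly accurate: Trevisan's construction with a list-decodable code and Raz--Reingold--Vadhan weak designs, analyzed via the reconstruction paradigm, with the crucial quantum ingredient being that the one-bit local extractor remains secure against quantum side information. One small imprecision: the key quantum step in \cite{DPVR09} is not exactly a ``quantum Goldreich--Levin theorem'' but rather the observation (essentially \cite{KT08}) that any classical one-bit extractor is automatically secure against quantum side information, which is what lets the classical reconstruction argument go through hybrid by hybrid. The entropy accounting you describe is in the right spirit, though the precise constants ($8\log k$, $8\logeps$) come out of the specific choice of code and design parameters in \cite{DPVR09} rather than from a generic advice count, so to actually nail down those constants you would need to track their instantiation more carefully.
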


\begin{proof-of-theorem}{\ref{thm:deor_knowledge}}
$E_D$ is an X-strong extractor against quantum knowledge with $\frac{1}{6}(k_1 + k_2 - n - 6\logeps) - O(1)$
output bits.  This is larger than $O(\log^3(n/\eps))$ when $k_1+k_2 > n+\Omega(\log^3(n/\eps))$. Composing with
the seeded extractor of Cor.~\ref{cor:dpvr_ext_knowledge} on the source $X$ gives
$m = \frac{1}{6}(k_1+k_2-n-6\logeps) + k_1 - 8\log k_1 - 8\logeps - O(1)$, and similarly for $Y$.
\end{proof-of-theorem}

\section*{Acknowledgments}
The authors would like to thank Nir Bitansky, Ashwin Nayak, Oded Regev, Amnon Ta-Shma, Thomas Vidick and Ronald de Wolf
for valuable discussions. We are especially indebted to Ronald de Wolf for allowing us to use his exact protocol for IP
in the SMP model with entanglement, and to Thomas Vidick for pointing out how to replace $D$ with $M$ in our XOR-Lemma,
which allowed us to prove Thm.~\ref{thm:deor_knowledge}.

\newcommand{\etalchar}[1]{$^{#1}$}


\appendix

\section{Many Bit Extractors Against Quantum Storage from Classical Storage}\label{app:storage}

K{\"o}nig and Terhal \cite{KT08} prove that any (classical) seeded extractor is secure against {\em non-entangled}
quantum storage, albeit with exponentially larger (in the storage size) error. Their proof is also valid for X-strong
(Y-strong) two-source extractors.

Their Lemma 5 essentially shows that every $(k_1,k_2,\eps)$ X-strong extractor has error $4\cdot 2^{3b_2}\cdot\eps$
against $(b_1,b_2)$ quantum storage (for any $b_1$), assuming $H_\infty(X) \ge k_1$ and $H_g(Y \leftarrow \rho_Y) \ge
k_2 + \logeps$. Recall that $H_g(Y \leftarrow \rho_Y) \ge H_\infty(Y) - b_2$. Adapted to our definitions, their result
is

\begin{lemma}[{\cite[Lemma 5]{KT08}}]\label{lem:deor_xstrong_kt}
Let $E$ be a $(k_1,k_2,\eps)$ X-strong extractor. Then $E$ is a $(k_1, k_2 + b_2 + \logeps, 4\cdot2^{3b_2}\eps)$
X-strong extractor against $(b_1,b_2)$ non-entangled storage.
\end{lemma}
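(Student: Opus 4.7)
The plan is to reduce the claim to Lemma~5 of \cite{KT08}, which bounds the quantum-storage security of a classical function family in terms of its classical security. The first step is to strip the Alice side. Because the storage is non-entangled, $\rho_{xy} = \rho_x \otimes \rho_y$, and the X-strong test exposes $X$. Conditioning on $X = x$, the Alice register holds the fixed state $\rho_x$ which is not touched by $E$ and therefore factors out of the trace norm, leaving
\begin{align*}
\norm{tr}{E(X,Y)\rho_X\rho_Y X - U_m\rho_X\rho_Y X} = \E_{x\sim X} \norm{tr}{E(x,Y)\rho_Y - U_m\rho_Y}.
\end{align*}
In particular $b_1$ does not enter the final bound, which matches the statement.

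For each $x$ define $f_x(y) := E(x,y)$. The classical X-strong hypothesis on $E$ says that the family $\{f_x\}_{x}$ has classical error at most $\eps$ (in expectation over $x\sim X$) against any $Y$ with $H_\infty(Y)\ge k_2$. Although Lemma~5 of \cite{KT08} is phrased for seeded extractors (where the ``seed'' is uniform), its proof is really a statement about an arbitrary classical function family: if $\{f_x\}_x$ has expected classical error $\eps$ at min-entropy $k_2$, then for every $Y$ with $H_\infty(Y) \ge k_2 + b_2 + \logeps$ and every $b_2$-qubit $\rho_Y$ one has $\E_{x}\norm{tr}{f_x(Y)\rho_Y - U_m\rho_Y} \le 4 \cdot 2^{3b_2}\eps$. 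Combining this with the reduction above yields exactly the bound claimed for the X-strong extractor against $(b_1,b_2)$ non-entangled storage.

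The heart of the KT08 argument, which I would verify carries over verbatim, consists of three ingredients: (i) an $\eps$-net of size $2^{O(b_2)}$ in operator norm over two-outcome POVMs on $b_2$ qubits reduces the quantum trace distance to a classical distinguishing game against a discrete adversary; (ii) $H_g(Y \leftarrow \rho_Y) \ge H_\infty(Y) - b_2 \ge k_2 + \logeps$ combined with Markov's inequality implies that for all but an $\eps$-fraction of outcomes $z$ of any fixed POVM the conditional source $Y \mid Z = z$ still has min-entropy at least $k_2$; (iii) on these ``good'' $z$'s the classical X-strong security of $\{f_x\}_x$ fires. The $2^{3b_2}$ loss is the product of a union bound over the net points and a further $2^{O(b_2)}$ factor coming from the conversion of a two-outcome quantum distinguisher into a classical measurement outcome; the leading $4$ absorbs small constants from the net and Markov steps.

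The only real obstacle, and it is essentially cosmetic, is checking that the KT08 argument, written for a uniformly random seed independent of the source, goes through unchanged when the ``seed'' is replaced by $x \sim X$. This is immediate since the entire argument is carried out conditionally on a fixed $x$ and only uses the expected classical security of $\{f_x\}_x$ over $x \sim X$; the independence of $X$ and $Y$ (and of $X$ from Bob's side of the non-entangled storage) ensures that conditioning on $X=x$ leaves $\rho_Y$, its dimension, and its guessing entropy relative to $Y$ untouched.
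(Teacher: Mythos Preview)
Your proposal is correct and matches the paper's own treatment. The paper does not actually give a proof of this lemma; it simply states it as an adaptation of \cite[Lemma~5]{KT08}, noting in the preceding paragraph that (i) the KT08 proof for seeded extractors carries over to X-strong two-source extractors, and (ii) the guessing-entropy hypothesis $H_g(Y\leftarrow\rho_Y)\ge k_2+\logeps$ follows from $H_\infty(Y)\ge k_2+b_2+\logeps$ via $H_g\ge H_\infty-b_2$. Your reduction that factors out $\rho_x$ after conditioning on $X=x$ (using non-entanglement) and your observation that the KT08 argument only uses the \emph{expected} classical security over the ``seed'' distribution are exactly the two points needed to make this adaptation rigorous, so your write-up is in fact more detailed than what the paper provides.
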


In particular, this shows that $E_D$ is an X-strong extractor with $m = k_1 + k_2 - 10b_2 - n - 4 - 3\logeps$. For
comparison, our Lem.~\ref{lem:deor_xstrong_en} gives $m = \half(k_1 + k_2 - b_2 - n + 2 - 2\logeps)$, which is better
when the storage is large, say, $b_2 \ge k_2/19$.

For completeness, we derive an alternate version of Thm.~\ref{thm:deor_unen} based on Lem.~\ref{lem:deor_xstrong_kt},
by composing the extractor above with the seeded extractor of \cite{DPVR09}.

\begin{theorem}
The DEOR-construction is a $(k_1,k_2,\eps)$ extractor against $(b_1,b_2)$ non-entangled storage with $m =
(1-o(1))\max(k_1-9b_2, k_2-9b_1) + k_1-b_1 + k_2-b_2 - n - 11\logeps - O(1)$ output bits provided $k_1 + k_2 - 10\max(b_1,b_2) > n + \Omega(\log^3(n/\eps))$.
\end{theorem}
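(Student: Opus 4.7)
The plan is to mimic the proof of Thm.~\ref{thm:deor_unen} verbatim, replacing the direct X-strong (Y-strong) DEOR bound of Lem.~\ref{lem:deor_xstrong_en} by the classical-to-quantum conversion of Lem.~\ref{lem:deor_xstrong_kt}. That is, I would exhibit $E_D$ as a (super)strong two-source extractor against $(b_1,b_2)$ non-entangled storage via Lem.~\ref{lem:deor_xstrong_kt}, and then feed its output as a seed into the seeded quantum-secure extractor of Cor.~\ref{cor:dpvr_seeded_ext}, invoking Lem.~\ref{lem:composition}(b) to stitch the two together.

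Concretely, starting from the classical X-strong DEOR guarantee of Lem.~\ref{lem:deor_xstrong_en} specialized to $b_1 = b_2 = 0$, I would plug it into Lem.~\ref{lem:deor_xstrong_kt}, solving the two constraints (target error $4\cdot 2^{3b_2}\eps_c = \eps$ and inflated $Y$-min-entropy $k_2 = k_2^{\mathrm{cl}} + b_2 + \log\eps_c^{-1}$) against the classical output-length constraint. This yields, as recorded just after Lem.~\ref{lem:deor_xstrong_kt}, an X-strong DEOR extractor against $(b_1,b_2)$ non-entangled storage whose output length is roughly $k_1 + k_2 - 10 b_2 - n - O(\logeps)$ bits. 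I would then check that under the hypothesis $k_1 + k_2 - 10\max(b_1,b_2) > n + \Omega(\log^3(n/\eps))$, this count is larger than the seed length $d = O(\log^3(n/\eps))$ demanded by Cor.~\ref{cor:dpvr_seeded_ext}, so the composition is well-defined.

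Invoking Lem.~\ref{lem:composition}(b) with this X-strong $E_D$ and the seeded extractor of Cor.~\ref{cor:dpvr_seeded_ext} on source $X$ (which needs only to tolerate $b_1$ qubits of storage in the non-entangled case), the combined output is $m_X + (k_1-b_1) - 8\log(k_1-b_1) - 8\logeps - O(1)$, which simplifies to $(1-o(1))(k_1-9b_2) + (k_1-b_1+k_2-b_2-n) - 11\logeps - O(1)$. By the symmetric argument with the Y-strong version of Lem.~\ref{lem:deor_xstrong_kt}, composed with Cor.~\ref{cor:dpvr_seeded_ext} on source $Y$, one obtains $(1-o(1))(k_2-9b_1) + (k_1-b_1+k_2-b_2-n) - 11\logeps - O(1)$ output bits. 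Taking the better of the two produces the $\max(k_1-9b_2, k_2-9b_1)$ term and gives the claimed formula.

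There is no new technical obstacle: Lem.~\ref{lem:deor_xstrong_kt} already supplies the strong two-source primitive, Cor.~\ref{cor:dpvr_seeded_ext} already supplies the quantum-secure seeded extractor, and Lem.~\ref{lem:composition}(b) glues them. The only bookkeeping concern is to track additive constants carefully so that (i) the seed-length hypothesis of Cor.~\ref{cor:dpvr_seeded_ext} is satisfied under the stated condition on $k_1 + k_2 - 10\max(b_1,b_2)$, and (ii) the several $\logeps$ terms picked up at each step (two from the classical strong-DEOR error analysis, three from the $4\cdot 2^{3b_2}$ blow-up and the $k_2$ shift in Lem.~\ref{lem:deor_xstrong_kt}, and eight from Cor.~\ref{cor:dpvr_seeded_ext}) sum to the $11\logeps$ appearing in the statement.
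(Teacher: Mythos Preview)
Your outline matches the paper's approach exactly: derive an X-strong (resp.\ Y-strong) quantum-secure DEOR via Lem.~\ref{lem:deor_xstrong_kt}, check the seed-length condition, compose with Cor.~\ref{cor:dpvr_seeded_ext} through Lem.~\ref{lem:composition}(b), and take the better of the two symmetric routes.

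There is, however, one substantive slip in the input you feed to Lem.~\ref{lem:deor_xstrong_kt}. You propose to obtain the \emph{classical} X-strong DEOR guarantee by specializing Lem.~\ref{lem:deor_xstrong_en}(b) to $b_1=b_2=0$. That lemma is proved via the \emph{quantum} XOR-Lemma~\ref{lem:xor} and therefore carries the extra $2^m$ penalty even when $b_2=0$; it only yields $m\le \tfrac12(k_1+k_2-n+2)-\logeps$. Plugging this into Lem.~\ref{lem:deor_xstrong_kt} would give roughly $m_X\approx \tfrac12(k_1+k_2-10b_2-n)$, not the $m_X=k_1+k_2-10b_2-n-4-3\logeps$ that you then quote from the text following Lem.~\ref{lem:deor_xstrong_kt}, and the final output count would fall short of the theorem by essentially a factor of two in the leading term. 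The paper instead feeds in the original \emph{classical} DEOR strong-blender bound of~\cite{DEOR04} (derived via the classical XOR-Lemma, with no $2^m$ factor), which gives $m\le k_1+k_2-n-2\logeps+O(1)$; that is precisely why the comparison sentence after Lem.~\ref{lem:deor_xstrong_kt} contrasts the two counts. With that correction your bookkeeping also lines up: the $3\logeps$ arising in $m_X$ (two from the classical DEOR error plus one from the $k_2$-shift, each evaluated at $\eps_c$ with $\log\eps_c^{-1}=\logeps+3b_2+2$) together with the $8\logeps$ from Cor.~\ref{cor:dpvr_seeded_ext} give the stated $11\logeps$.
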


Here too we are able to extract more bits than guaranteed by Thm.~\ref{thm:deor_unen} when the storage is symmetric and
constitutes a small fraction $(<1/19)$ of the min-entropy. In particular, the storage must be at least ten times
smaller than the min-entropy, whereas no such restriction exist in Thm.~\ref{thm:deor_unen}.

We note that it is not immediately possible to obtain an analogue of Lem.~\ref{lem:deor_xstrong_kt} for weak two-source
extractors. The proof relates the security of an extractor with respect to quantum side information, to its security
with respect to classical side information. In the weak extractor setting, it thus suffices to consider classical side
information of the form $\cF(\rho_X\tensor\rho_Y)$ for some specific POVM $\cF$ given in the proof. The problem with
this approach is that generally $\cF(\rho_X\tensor\rho_Y)$ might induce a random variable $C_{XY}$ correlated with both
$X$ and $Y$, breaking the independence assumption (i.e., when conditioning on values of $C_{XY}$, $X$ and $Y$ might not
be independent) and rendering the classical extractor insecure. It is not inconceivable that $\cF$ does have the
property $\cF(\rho_X\tensor\rho_Y) = C_X\tensor C_Y$, but we leave this open.

\end{document}